\documentclass[11pt,oneside]{article}
\usepackage[margin=1.0in]{geometry}
\usepackage{amsmath}
\usepackage{amsfonts}
\usepackage{amsthm}
\usepackage{amssymb}
\usepackage{enumerate}
\usepackage[T1]{fontenc}
\usepackage{mathrsfs}
\usepackage{mathdots}
\usepackage{graphicx}
\usepackage{subcaption}
\usepackage{mathtools}
\graphicspath{ {./images/} }
\usepackage{xcolor}
\usepackage{hyperref}
\hypersetup{colorlinks=true,linkcolor=red,citecolor=blue,filecolor=magenta,urlcolor=cyan}
\usepackage{subfiles}
\usepackage{pgfplots}
\usepackage{tikz}
\usepackage{bbm}
\usepackage{multirow}
\usepackage{geometry}
\usepackage{array} 
\pgfplotsset{compat=1.18} 
\usepackage{authblk}
\usepackage{algorithm}
\usepackage{algorithmic}
\usepackage{xcolor}
\usepackage{mathtools}

\usepackage[utf8]{inputenc}


\usepackage{natbib}
\usepackage{boldline} 

\newtheorem{theorem}{Theorem}[section]

\newtheorem{lemma}[theorem]{Lemma}

\theoremstyle{definition}

\newtheorem{definition}[theorem]{Definition}

\newcommand{\X}{\mathcal{X}}
\newcommand{\Y}{\mathcal{Y}}

\newcommand{\M}{\mathcal{M}}
\newcommand{\D}{\mathcal{D}}
\newcommand{\Ha}{\mathcal{H}}
\newcommand{\G}{\mathcal{G}}

\newcommand{\E}{\mathop{\mathbb{E}}}

\title{Reconciling Predictive Multiplicity in Practice}
\author[1]{Tina Behzad\thanks{Email: \texttt{tina.behzad@stonybrook.edu}.}}
\author[2]{S\'ilvia Casacuberta\thanks{Email: \texttt{silvia.casacuberta.puig@cs.ox.ac.uk}.}}
        \author[3]{Emily Ruth Diana\thanks{Email: \texttt{ediana@andrew.cmu.edu}.}}
        \author[4]{Alexander Williams Tolbert\thanks{Email: \texttt{alexander.tolbert@emory.edu}.}}
\affil[1]{Stony Brook University}
\affil[2]{University of Oxford}
\affil[3]{CMU Tepper School of Business}
\affil[4]{Emory University}
\date{\today}


\begin{document}

\maketitle

\begin{abstract}
  Many machine learning applications predict individual probabilities, such as the likelihood that a person develops a particular illness. Since these probabilities are unknown, a key question is how to address situations in which different models trained on the same dataset produce varying predictions for certain individuals. This issue is exemplified by the model multiplicity (MM) phenomenon, where a set of comparable models yield inconsistent predictions. Roth, Tolbert, and Weinstein recently introduced a reconciliation procedure, the \textit{Reconcile algorithm}, to address this problem. Given two disagreeing models, the algorithm leverages their disagreement to falsify and improve at least one of the models.
  In this paper, we empirically analyze the Reconcile algorithm using five widely-used fairness datasets: COMPAS, Communities and Crime, Adult, Statlog (German Credit Data), and the ACS Dataset. We examine how Reconcile fits within the model multiplicity literature and compare it to existing MM solutions, demonstrating its effectiveness. We also discuss potential improvements to the Reconcile algorithm theoretically and practically. Finally, we extend the Reconcile algorithm to the setting of causal inference, given that different competing estimators can again disagree on specific causal average treatment effect (CATE) values. We present the first extension of the Reconcile algorithm in causal inference, analyze its theoretical properties, and conduct empirical tests. Our results confirm the practical effectiveness of Reconcile and its applicability across various domains.
\end{abstract}

\newpage

{
  \hypersetup{linkcolor=black}
  \tableofcontents
}

\newpage

\section{Introduction}

In recent years, as algorithms have increasingly been deployed in society to determine life-altering outcomes for individuals, a growing body of research has focused on studying the fairness and robustness properties of these predictors. 
However, many approaches implicitly assume that there is one unique predictor that is adequate for the given task and data.
Recent work has started to pay closer attention to the problem of \emph{model multiplicity} \citep{marx2020predictive}: what should we do when we have competing models that assign conflicting predictions to certain individuals? 
This issue is particularly significant in fairness-sensitive applications, such as loan approvals or predictions of criminal recidivism, where inconsistent individual predictions can lead to arbitrary and potentially discriminatory outcomes. The phenomenon is also known as the \emph{Rashomon effect} \citep{hsu2022rashomon, breiman2003statistical}, which emphasizes the existence of multiple plausible models that explain the data equally well.
\emph{Predictive multiplicity} (PM), a subset of MM, specifically refers to the variability in the predictions of different models within the same accuracy threshold \citep{marx2020predictive}.
Beyond machine learning, model multiplicity is an instance of the broader \emph{reference class problem}, which is a fundamental challenge in probabilistic reasoning that arises has been extensively documented across several kinds of literature from philosophy to the causal inference literature \citep{bolinger2020rational, gardiner2018evidentialism,dahabreh2017heterogeneity,Roth2022-sd}.
The challenges that arise within the MM problem have also recently been explored in the legal domain, where it has been argued that entities should have a legal duty to search for the least discriminatory algorithm within the set of competing models \citep{black2024less,laufer2024fundamental}.

Recently, Roth and Tolbert proposed a new algorithm aimed at reconciling two predictors that agree on the underlying data but disagree on specific individual predictions: they showed that this is resolvable through an efficient reconciliation process that updates the two predictors such that their respective accuracies can only improve, and the two updated models approximately agree on their predictions of individual probabilities almost on the entire domain \citep{Roth2022-sd}.
Throughout this paper, we call this algorithm (formalized in Sections~\ref{sec:notation} and \ref{appendix-sec:reconcilealgo}) \emph{Reconcile}.
While the Reconcile algorithm is theoretically able to resolve the predictive multiplicity problem among predictors that agree on the underlying data,
the original paper by \cite{Roth2022-sd} provides no empirical analysis of the efficacy of the reconciliation process in practice nor further exploration of the various ways in which the Reconcile algorithm can be used on actual predictors and datasets.
Moreover, while the authors mention the connections of the Reconcile algorithm to the model multiplicity problem, they do not explicitly compare the Reconcile approach to the solutions and metrics that have been proposed in the MM literature. In this paper, we investigate the Reconcile algorithm in the context of the model multiplicity problem, and so our empirical studies are centered on previous works on the predictive multiplicity problem.

\paragraph{Our contributions.} We perform an in-depth analysis of the Reconcile algorithm.
For the prediction setting (i.e., the usual PM setting), we provide the first implementation of the Reconcile algorithm and apply it to five of the most popular datasets in fair machine learning:
COMPAS \citep{Larson2016a}, Communities and Crime \citep{misc_communities_and_crime_183}, Adult \citep{misc_adult_2}, Statlog (German Credit Data) \citep{hofmann1994statlog}, and the ACS Dataset from the Folktables package \citep{ding2021retiring}.
For each of the datasets, we generate various model pairs with comparable accuracies but with disagreeing predictions using four different strategies.
We then apply the Reconcile algorithm to each of the pairs and record the evolution of the square losses and the fraction of disagreement between the models throughout the iterations of the algorithm. 
Our results indicate that Reconcile offers a fast-converging solution to reduce disagreement between models to the point that the models become almost identical in all their predictions.
Next, we compare Reconcile to the solutions previously proposed in the model multiplicity literature, clarifying the various conceptual connections and performing an empirical evaluation of the different methods. 
The various solutions proposed in the MM literature can roughly be divided into classes addressing two different goals:
The first and main goal is to determine a way of choosing a \emph{single} model $m$ derived from a class $\M$ of competing models 
in a non-arbitrary way.
A second goal is to quantify the severity of the PM problem in a given class of models $\M$ using different metrics. 
In this case, we do not necessarily wish to select a final model; we only want to quantify the degree of disagreement within the class $\M$. 
As we demonstrate in this paper, Reconcile is an extremely useful algorithm for both use cases.

For the first goal,
given a class of competing models $\M$, we propose a new model aggregation method based on the sequential application of the Reconcile algorithm, which we call \emph{sequential Reconcile}, and compare it with the three solutions arising from the MM literature described in \citep{black2022model}.
For the second goal, we describe different ways in which we can apply the Reconcile algorithm to a class $\M$ of models by repeatedly applying the reconciliation process (since the Reconcile algorithm only takes as input two models at a time).
Through this repeated application of the Reconcile algorithm, we obtain a new class $\M'$ of models that have undergone the reconciliation process.
Next, we apply various metrics proposed in the predictive multiplicity literature to both $\M$ and $\M'$, and we empirically demonstrate how Reconcile drastically reduces the amount of disagreement within the class of models.

We further study the benefits of using sequential Reconcile as model aggregation method.
We extend the original analysis of Reconcile and demonstrate both theoretically and empirically that, as a model aggregator method, Reconcile exhibits strong robustness and fairness guarantees, neither of which necessarily hold for other popular model class aggregation methods such as mean aggregation.
Specifically, on the robustness front, we show that sequential Reconcile is robust to the presence of outlier models in the class $\M$ with lower accuracy, and as long as one model in the class has good accuracy, the final ensembled model will to.
On the fairness front, we demonstrate how sequential Reconcile ensures that the accuracy of models over minority subgroups is preserved under ensembling.
That is, if one model in the class has accurately learnt the predictions for a (large enough) minority group and we reconcile it with predictors that are accurate overall but not over this minority group, then the reconciliation process ensures that the classifiers in the model class all increase their accuracy over the minority group.
This is a relevant and desirable property in societal applications: even if many predictors in the class $\M$ are highly inaccurate over a minority group, it is enough to have a predictor in $\M$ that is accurate for the minority group in order to ensure that the final ensembled predictor has good accuracy over the minority group.

After demonstrating the effectiveness of the Reconcile algorithm in practice in the prediction setting, we provide the first extension of Reconcile to causal inference. 
We analyze the algorithm theoretically, showing guarantees similar to those in the prediction setting, and then evaluate it empirically on the popular Twins dataset \citep{almond2005costs, guo2020survey} and the National Study dataset \citep{nosek2015promoting} in causal inference.
Our results demonstrate that this extension of Reconcile efficiently reconciles treatment effect values between pairs of causal estimators that initially present significant disagreement.
Overall, our paper thoroughly demonstrates and extends the efficacy of the Reconcile algorithm in many different settings.

\section{Related Work}

\paragraph{Reference Class Problem \& Predictive
Multiplicity in Machine Learning.}
\emph{Predictive multiplicity} (PM), a subset of MM, specifically refers to the variability in the predictions of different models within the same accuracy threshold \citep{marx2020predictive}. According to \citep{black2022model}, MM can be categorized into \emph{procedural multiplicity}, where models differ in their internal structures, and PM, where models diverge in their output predictions. PM poses significant challenges in ensuring fairness and consistency, as it complicates the selection of a single model for deployment and raises concerns about the arbitrary nature of individual predictions.
To address the challenge of determining final outcomes amidst conflicting predictions, much of the predictive multiplicity literature emphasizes \emph{ensembling methods}. These methods aggregate different predictions using predefined strategies, such as taking the mode \citep{black2021selective}.
This is a well-known approach for reducing predictive variance \citep{lincoln1989synergy}.
However, since MM arises when models in a class $\mathcal{M}$ share similar accuracies, it is unclear which criteria to prioritize when selecting a single model from $\mathcal{M}$, aside from avoiding arbitrariness. 
Consequently, practitioners often refrain from returning a specific aggregated model derived from $\mathcal{M}$ and instead aim to quantify the extent of prediction variability \citep{hsu2022rashomon}.
Other recent works argue for the need to add an axis of arbitrariness when deploying models, besides reporting accuracy and fairness metrics \citep{long2024individual}.
Predictive multiplicity has also been explored in connection to the problem of target choice \citep{watson2023multi} and to the dataset multiplicity problem \citep{meyer2023dataset}.

However, MM can present opportunities beyond its challenges \citep{black2022model}. When no single model emerges as the most accurate, the existence of a diverse set of similarly accurate models allows practitioners to incorporate additional criteria, such as fairness and robustness, into the model selection process without compromising accuracy \citep{black2022model}. This flexibility can help mitigate the issue of \emph{algorithmic monoculture} \citep{kleinberg2021algorithmic}, where reliance on a single algorithmic approach can lead to systemic vulnerabilities.
An important line of work has introduced several ways of visualizing and comparing the different predictors within a set of good models \citep{dong2019variable, fisher2019all, donnelly2023rashomon, zhong2024exploring}.
Given these considerations, a significant portion of the MM literature is dedicated to \emph{quantifying} the severity of predictive multiplicity within a model class $\mathcal{M}$. By measuring metrics such as \emph{ambiguity} and \emph{discrepancy} \citep{marx2020predictive, watson2023predictive}, researchers aim to characterize multiplicity based on model outputs rather than their internal parameters. These metrics are related to \emph{predictive churn}, which assesses how updated models differ from their predecessors \citep{watson2023predictive}. Often, the model class $\mathcal{M}$ is defined as a \emph{Rashomon set}: given a hypothesis space $\mathcal{H}$ and a parameter $\epsilon \geq 0$, the Rashomon set $\mathcal{R}(\mathcal{H}, \epsilon)$ comprises all models in $\mathcal{H}$ that incur a loss below $\epsilon$. Sometimes, the Rashomon set is defined relative to a fixed baseline model \citep{watson2024predictive}. For example, \cite{hsu2022rashomon} introduce \emph{Rashomon capacity}, a metric for quantifying predictive multiplicity within a Rashomon set. Existing approaches that consider Rashomon sets restrict their analysis to models within the class $\mathcal{H}$. In contrast, the Reconcile algorithm is not confined to a specific hypothesis class.
Within this enlarged class, Reconcile is able to obtain accurate and consistent predictions.

\paragraph{The Reference Class Problem in Causal Inference.}
While model multiplicity was originally introduced in the context of prediction, the model multiplicity problem naturally extends to the setting of causal inference, where we are interested in estimating heterogeneous treatment effects across subgroups.
Indeed, many algorithms in causal inference rely on supervised learning and regression methods from machine learning \citep{kunzel2019metalearners}.
Similar to the prediction setting, different competing causal treatment effect estimators can disagree on their predictions among particular subgroups.
These differences in treatment effects are common in many policy intervention settings, such as in healthcare and epidemiology \citep{dahabreh2017heterogeneity}.
Hence this difference in treatment effects has implications for designing, administering, and evaluating treatments that have optimal outcomes across diverse population segments \citep{Kent2012-jq,steyerberg2005equally}. 

\paragraph{The Reconcile Algorithm.}
Building on Philip Dawid's insights \citep{dawid2017individual}, Roth, Tolbert, and Weinstein recently introduced the \emph{Reconcile algorithm}: they showed that, while individual probabilities are inherently unknowable, they are \emph{falsifiable} \citep{Roth2022-sd}. 
Namely, given two different models $f_1$, $f_2$ that are learned on the same data, we can perform an efficient reconciliation procedure on $(f_1, f_2)$:
If $f_1, f_2$ already agree in their predictions, then we are done.
Otherwise, we can efficiently define a group (which identifies a subset of the individuals) that ``witnesses'' this disagreement.
Then, we can use this group to update either $f_1$ or $f_2$, so that the updated model now makes correct predictions on average with respect to this fixed group. 
We can iteratively apply this procedure until the two models no longer significantly disagree.
\cite{Roth2022-sd} show that this process converges and that each of these updates can only improve the accuracy of the models.
In a concurrent work, \cite{du2024reconciling} extend the Reconcile algorithm for downstream decision making, so that the predictors agree not only on individual predictions but also on the best-response actions for each individual in the downstream decision-making task.

\section{Notation and Preliminaries}\label{sec:notation}
\subsection{Prediction Setting}
We work over a domain $\X \times \Y$, where $\X$ denotes a finite feature domain and $\Y$ denotes the set of labels. 
Following \citep{Roth2022-sd}, we restrict $\Y$ to the set of binary labels $\Y = \{0,1\}$.
In our evaluations, we consider both the regression and classification settings.
Next, we formalize the framework of individual probabilities.
We let $f^*$ denote the ground-truth probabilities for each $x \in \X$.
That is, $f^*: \X \rightarrow [0,1]$ denotes the function that, for each $x \in \X$, returns the value $f^*(x) = \Pr_{(x, y) \sim \D}[y=1 | x]$,
where $\D$ is some probability distribution over $\X \times \Y$.
Given a distribution \( \D \) on $\X \times \Y$ and a group \( g : \X \rightarrow \{0,1\} \), we say that $g$ has probability mass \( \mu(g) \), where $\mu(g) \triangleq \Pr_{(x,y) \sim \D} [g(x) = 1]$.
Our goal is to build a \emph{model} $f: \X \rightarrow [0,1]$ of $f^*$.
Note that we observe the realized Boolean labels $y$, but we never have access to the \emph{individual probabilities} $f^*$.
Still, we can evaluate how ``good'' a proposed model $f$ is with respect to $f^*$ by computing its \emph{Brier score} over the observed labels $B(f, \D) \triangleq \E_{(x, y) \sim \D}[(f(x)-y)^2]$.
We can estimate the Brier score $f$ given enough samples from the distribution. 
In our empirical analysis, we will take $\D$ to be the uniform distribution over the given dataset $D = \{(x_i, y_i)\}$.
Note that the Brier score is minimized at $f = f^*$: if we have two models $f_1$ and $f_2$ such that $B(f_1, \D) < B(f_2, \D)$, then it follows that $f_2$ cannot be equal to $f^*$.
It is through this characterization with the Brier score that \citep{Roth2022-sd} are able to contest two disagreeing models $f_1$ and $f_2$ and falsify one.

\paragraph{Set-up for the Reconcile algorithm.} 
Following the notation in \citep{Roth2022-sd}, we introduce additional definitions for the Reconcile algorithm.
We say that two models $f_1, f_2$ have an \( \epsilon \)-disagreement on \( x \in \X \) if $|f_1(x) - f_2(x)| > \epsilon$.
Having fixed $f_1$ and $f_2$, we denote by $U_{\epsilon}(f_1, f_2)$ the set of points $x \in \X$ on which $f_1$ and $f_2$ have an $\epsilon$-disagreement.
Further, we write $U_{\epsilon}(f_1, f_2) = U^{>}_{\epsilon}(f_1, f_2) \cup U_{\epsilon}^{<}(f_1, f_2)$, where $U^{>}_{\epsilon}(f_1, f_2)$ denotes the points $x \in U_{\epsilon}(f_1, f_2)$ such that $f_1(x) > f_2(x)$, and similarly for $U^{<}_{\epsilon}(f_1, f_2)$.
We are interested in disagreement regions that have at least $\alpha$ probability mass, where $\alpha>0$ is a user-specified parameter.
In the literature on multigroup fairness \citep{hebert2018multicalibration}, the key idea on how to test for accuracy or calibration for a model $f$ (with respect to the underlying $f^*$) is as follows: we cannot know any particular value $f^*(x)$ for a given $x \in \X$, but we can instead consider a group $g: \X \rightarrow \{0,1\}$ 
and measure the average of the realized boolean outcomes $y$ over $g$.
Then, we can compare the average value given by the $y$ labels over $g$ versus the average value predicted by our proposed model $f$. 
Reversing this argument, we can demonstrate a violation of accuracy/calibration on $f$ by finding a group $g$ such that $f$ does \emph{not} satisfy the required property on average over $g$.

In \citep{Roth2022-sd}, the desired property on which we want to test for possible violations is $\alpha$-approximate group conditional mean consistency (see Definition 3.3 in \citep{Roth2022-sd}, which we reproduce in Definition~\ref{definition:meanconsistency} in Section~\ref{appendix-sec:reconcilealgo}).
The key idea behind their algorithm is the following: if two models $f_1, f_2$ have at least an $\epsilon$-disagreement, then at least one of $U^{>}_{\epsilon}(f_1, f_2)$ or $U_{\epsilon}^{<}(f_1, f_2)$ witnesses a large enough violation of group conditional mean consistency. 
We can then use the group witnessing the violation to update the corresponding model, in a way that ensures that we make progress measured in terms of the Brier score. 
In \citep[Lemma 3.2]{Roth2022-sd}, they show that given any model $f_t$ and group $g$ witnessing a violation of $\alpha$-approximate group conditional mean consistency on $f_t$, we can efficiently produce a model $f_{t+1}$ that improves its Brier score by at least $\alpha \epsilon^2$.
We defer the formal description of Reconcile to the appendix; Figure~\ref{fig:reconcile} summarizes the algorithm.
Let $T_1$ denote the total number of updates performed to $f_1$, and likewise $T_2$ for $f_2$.
\cite{Roth2022-sd} prove the following main theorem, which we analyze in our empirical results:
\begin{theorem}[\citep{Roth2022-sd}]\label{thm:mainthm}
    Given any pair of models $f_1, f_2: \X \rightarrow [0,1]$, distribution $\D$ on $\X \times \Y$, and parameters $\alpha, \epsilon>0$, the two models $(f^{T_1}_1, f^{T_2}_2)$ returned by Reconcile when run on $(f_1, f_2, \D, \alpha, \epsilon)$, where $T = T_1+T_2$, satisfy the following properties:

    (1) The algorithm terminates quickly: $T \leq (B(f_1, \D) + B(f_2, \D)) \cdot \frac{16}{\alpha \epsilon^2}$.

    (2) Reconcile improves the Brier scores of the models: $B(f_1^{T_1}, \D) \leq B(f_1, \D) - T_1 \cdot \frac{\alpha \epsilon^2}{16}$ and $B(f_2^{T_2}, \D) \leq B(f_2, \D) - T_2 \cdot \frac{\alpha \epsilon^2}{16}$.

    (3) Reconcile reduces disagreement between models: $\mu(U_{\epsilon}(f_1^{T_1}, f_2^{T_2})) < \alpha$.
\end{theorem}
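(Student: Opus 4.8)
The plan is to run a potential-function argument on the quantity $\Phi \triangleq B(f_1,\D)+B(f_2,\D)$, showing that every iteration of Reconcile strictly decreases $\Phi$ by a fixed amount, and then to read off all three claims from this fact together with the nonnegativity of the Brier score. The first thing I would note is that claim (2) is really the engine: once we know that each update applied to $f_1$ lowers $B(f_1,\D)$ by at least $\frac{\alpha\epsilon^2}{16}$ (and symmetrically for $f_2$), summing over the $T_1$ updates applied to $f_1$ yields $B(f_1^{T_1},\D)\le B(f_1,\D)-T_1\cdot\frac{\alpha\epsilon^2}{16}$ immediately, and likewise for $f_2$. So the entire proof reduces to a single per-iteration progress guarantee.

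The central step, and the one I expect to be the main obstacle, is establishing that progress guarantee: as long as the stopping condition has not been met, some update is available that improves one model's Brier score by at least $\frac{\alpha\epsilon^2}{16}$. Here I would use the structural decomposition $U_\epsilon(f_1,f_2)=U^{>}_\epsilon(f_1,f_2)\cup U^{<}_\epsilon(f_1,f_2)$. While the algorithm has not terminated we have $\mu(U_\epsilon(f_1,f_2))\ge\alpha$, so by pigeonhole one of the two pieces, say $U^{>}_\epsilon(f_1,f_2)$, carries mass at least $\alpha/2$; I would take this piece as the witnessing group $g$. On $g$ the two models differ by more than $\epsilon$ on average, so writing $v$ for the true conditional label mean $\E_{(x,y)\sim\D}[y\mid g(x)=1]$ and applying the triangle inequality $|\E_g[f_1]-v|+|\E_g[f_2]-v|\ge|\E_g[f_1-f_2]|>\epsilon$, at least one model has average prediction on $g$ deviating from $v$ by more than $\epsilon/2$. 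This is exactly a violation of $\alpha$-approximate group conditional mean consistency on a group of mass $\ge\alpha/2$, so I would invoke \citep[Lemma 3.2]{Roth2022-sd}, which patches that model by shifting its predictions on $g$ toward $v$ and converts the combination ``mass $\gtrsim\alpha$, deviation $\gtrsim\epsilon$'' into a Brier-score decrease; the constant $\tfrac{1}{16}$ in the theorem is what the careful accounting in that lemma produces once the pigeonhole factor ($\alpha/2$) and the triangle-inequality factor ($\epsilon/2$) are carried through (together with the clipping of patched predictions into $[0,1]$).

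With the per-iteration bound secured, the rest is bookkeeping. Claim (2) follows by the summation noted above. For claim (1), I would add the two inequalities of claim (2) and use $B(f_1^{T_1},\D),\,B(f_2^{T_2},\D)\ge0$ to get $(T_1+T_2)\cdot\frac{\alpha\epsilon^2}{16}\le B(f_1,\D)+B(f_2,\D)$, which rearranges to the stated bound on $T$; this same finiteness argument is precisely what guarantees the loop actually halts. Claim (3) is then immediate from the termination criterion: the algorithm only exits the loop when no disagreement region of mass $\ge\alpha$ remains, i.e.\ when $\mu(U_\epsilon(f_1^{T_1},f_2^{T_2}))<\alpha$. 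The only real care needed is to confirm that the per-step lemma applies at every iteration before termination, so that progress never stalls while disagreement is still large, which is exactly what the pigeonhole-plus-triangle-inequality argument of the previous paragraph ensures.
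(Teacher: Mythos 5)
Your proposal is correct and follows essentially the same route as the paper's own proof (given in full for the CATE variant in Appendix~B): pigeonhole on $U^{>}_{\epsilon}\cup U^{<}_{\epsilon}$ to get a witnessing group of mass at least $\alpha/2$, the triangle inequality to force one model's conditional mean at least $\epsilon/2$ away from the label mean, the patch lemma to convert this into a per-update Brier-score drop, and then summation, nonnegativity of the Brier score, and the halting condition for claims (2), (1), and (3) respectively. The only slip is your attribution of the remaining factor of $2$ (from $\alpha\epsilon^2/8$ down to $\alpha\epsilon^2/16$) to the clipping of patched predictions into $[0,1]$ --- the projection step can only decrease squared error and costs nothing; that factor actually comes from rounding $\tilde{\Delta}_t$ to the grid $[1/m]$ with $m=\lceil 2/(\sqrt{\alpha}\,\epsilon)\rceil$, which costs at most $1/(4m^2)\le \alpha\epsilon^2/16$ per update.
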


\cite{Roth2022-sd} provide the generalization bounds of Theorem~\ref{thm:mainthm}: we can run the Reconcile on a set of $n$ samples drawn from $\D$, and they show that with high probability over the sample of $D$, the guarantees of the main theorem translate over to $\D$, with error parameters that go to zero with the size of the data sample.

\begin{figure}
\centering
\includegraphics[width=0.8\columnwidth]{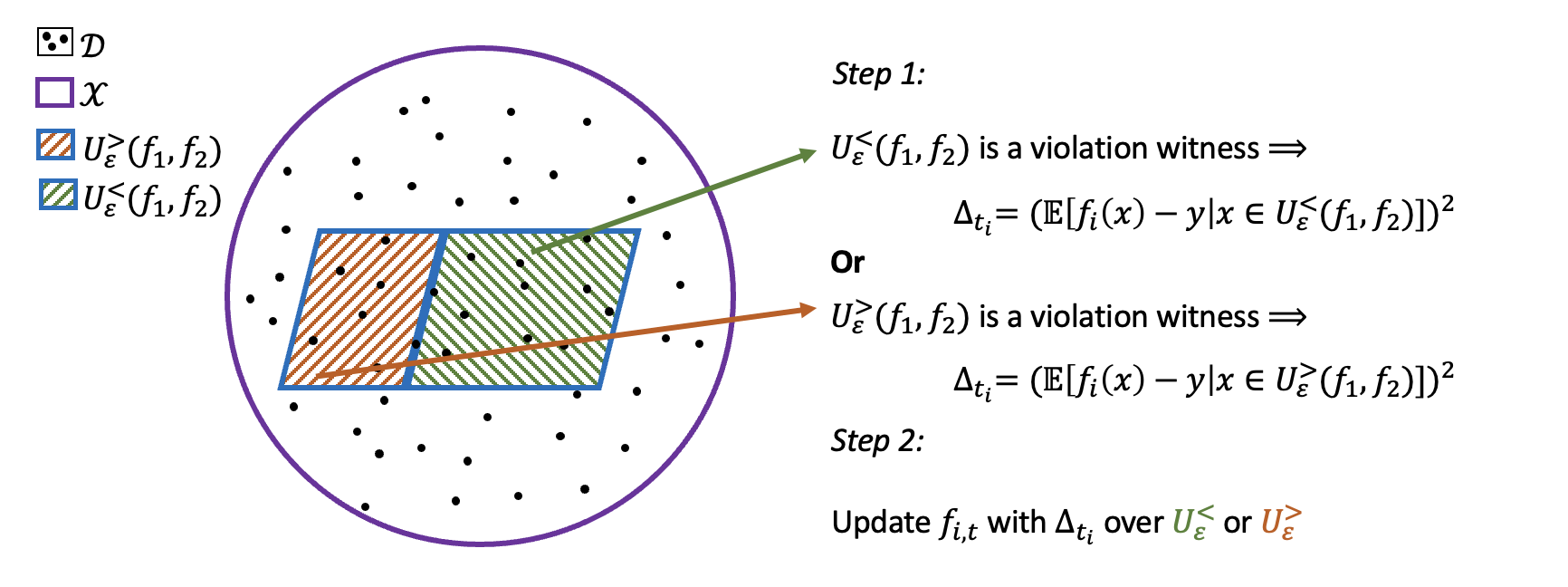}
\caption{Diagram of the Reconcile algorithm.}\label{fig:reconcile}
\end{figure}

\subsection{Causal Inference (CI) Setting}
We follow the standard notation and terminology from the causal inference (CI) literature \citep{ding2024first}.
We receive a dataset $\D = \{(X_i, Y_i, T_i)\}_{i=1}^N$ from a randomized experiments or observational study, where $X_i \in \X$ denotes the vector of covariates of the $i$-th individual, $T_i \in \{0,1\}$ represents the binary treatment assignment, and $Y_i \in \{0,1\}$ denote the outcome.
Ideally, we would like to compute the individual causal effect $\tau_i = Y_i(1)-Y_i(0)$.
However, given that only one of the potential outcomes is observed for each individual, we need to take averages.
In this paper, we are interested in the \emph{conditional average treatment effect} (CATE), which is defined as $\tau(x) = \E[Y(1)-Y(0) \mid X=x]$.
It represents the expected treatment effect for individuals with covariates $X=x$.

As usual in the CI literature, we impose the assumptions of (1) unconfoundedness: $\{Y(1), Y(0)\} \perp T \mid X$, (2) consistency and SUTVA: $Y_i = Y_i(T_i)$, and (3) overlap: $0 < \Pr(T=1 \mid X) <1$.  
We also assume that the outcomes $Y(1)$ and $Y(0)$ are bounded \citep{kern2024multi, kunzel2019metalearners}.
In the setting of CI, we adapt the Brier score to CATE estimators, which is defined as $B(\hat{\tau}, \mathcal{D}) = \mathbb{E}_{(x,y,t) \sim \mathcal{D}} \left[ \left( \hat{\tau}(x) - \tau(x) \right)^2 \right]$.

\section{Empirical Evaluation of Reconcile}\label{sec:experiments}
We now present an extensive empirical analysis of the Reconcile algorithm, and we try demonstrating how it fits within the broader PM literature.\footnote{The code is available at \url{https://github.com/tina-behzad/Reconciliation-Project}.}
A diagram of all of our experiments can be found in Appendix \ref{appendix: experiments}.
In the context of the MM literature, we show how we can use Reconcile both to (1) choose a model from a class $\M$ of competing models and to (2) reduce the severity of the PM problem within $\M$.

We use five datasets in our experiments. These datasets are commonly used in the MM literature and all include high-stake social predictions in which predictive multiplicity can have significant consequences. Specifically, we use the UCI Adult dataset \citep{misc_adult_2}, the COMPAS dataset \citep{Larson2016a}, the Communities and Crime dataset \citep{misc_communities_and_crime_183}, the Statlog German Credit Datset \citep{statlog_(german_credit_data)_144}, and the American Community Survey (ACS) dataset from \citep{ding2021retiring}. A full description of the datasets and the corresponding prediction tasks can be found in Appendix \ref{appendix:datasets}. For ACS, we perform three different prediction tasks: whether an individual’s income is above $\$50,000$ (referred to as Folk\_income in the text), whether an individual had the same residential address one year ago (Folk\_Mobility), and whether an individual has a commute to work that is longer than 20 minutes (Folk\_Travel). Having these datasets, we conduct an analysis of the Reconcile algorithm across both classification and regression tasks. We acknowledge that our evaluation is based exclusively on tabular datasets. However, and importantly, since Reconcile operates on predictions rather than the underlying data, its effectiveness is independent of the data type (e.g., image, text, etc.).

\subsection{Building Pairs of Models to Reconcile} \label{sec:building models}
Prior to the execution of Reconcile, we need to generate two models $f_1, f_2$ with comparable accuracy but significant disagreement in their point-wise predictions. The criteria for these conditions are:

\textbf{ Comparable accuracy}: Brier scores for classification tasks and Mean Squared Errors (MSE) for regression tasks\footnote{We note that the formal definition for the Brier score and MSE are the same. Following the general applications of the two, we refer to it as Brier score for classification tasks and MSE for regression tasks.} differing by no more than 0.05.

\textbf{Significant disagreement}: We define significant disagreement as \( \mu(U_{\epsilon}(f_1, f_2)) \geq \alpha \) where \( \epsilon = 0.2 \) and \( \alpha = 0.05 \). In all the predictive experiments we use the same values for $\alpha$ and $\epsilon$ when running Reconcile.

We employ three different strategies to generate $f_1, f_2$:
\begin{enumerate}
    \item Different model types for $f_1, f_2$ that are trained on the same data (e.g., logistic regression for $f_1$ and K-Nearest Neighbors for $f_2$). 
    \item Same model types for both $f_1$ and $f_2$ that are trained on different (disjoint) subsets of the data, where both subsets include all features.
    \item Same model types for both $f_1$ and $f_2$ that are trained on different (disjoint) subsets of features.
\end{enumerate}

Previous research on MM has explored methods for identifying the complete Rashomon set in logistic regression models and decision trees \citep{dong2019variable, xin2022exploring}.

\subsection{Reconcile Algorithm in Action} \label{sec:experiments Reconcile results}
Using the datasets described in the previous section we train the different models, allocating 60\% of each dataset to create the training set. The models were trained using the four approaches outlined in Section \ref{sec:building models}.
We use the decision tree regressor, $K$-neighbors regressor, and linear regression classes for regression tasks on the Communities dataset and the decision tree classifier, the $K$-nearest neighbor classifier, and logistic regression classes from the scikit-learn package \citep{scikit-learn} for classification tasks on all other datasets.\footnote{Reconcile is inherently model-agnostic, operating directly on predictions rather than the underlying models. As a result, its effectiveness remains consistent regardless of whether sophisticated deep-learning models or simple linear regression are used.}
After finding initial models $f_1$ and $f_2$, 
we run the Reconcile algorithm, using the validation set (which includes 20\% of the data) and analyze the result of Reconcile on the test set (the remaining 20\% of each dataset). 

To find $U_{\epsilon}$ in classification tasks with binary outcomes $0$ and $1$, we use the predicted probability of belonging to class $y=1$. We repeat each experiment 100 times. In some cases, we were not able to obtain models with a comparable accuracy that had a significant disagreement in their point-wise predictions. In total, we were able to run Reconcile $4,517$ times with different models and datasets.

We compare the empirical results with the three bounds provided in Theorem~\ref{thm:mainthm}.
In all scenarios, the algorithm consistently terminated within only $1$ to $7$ rounds.
This is significantly lower than the theoretical bounds specified in Theorem 3.1, which predicted much higher values (minimum of 344, maximum of $5,256$, and an average of $2,817.5$). This discrepancy suggests that this theoretical bound is extremely loose. It also indicates how quick reconcile converges, making it efficient enough to use in any scenario.
We further compare accuracy levels before and after Reconcile. 
The final scores were consistently lower than or equal to the bounds established by Theorem 3.1. A comparison of Brier/MSE scores for both models before and after applying Reconcile across all experiments shows that Reconcile leads to lower minimum, median, and maximum scores (indicating improved performance) across all datasets. These results are visualized in Figure \ref{fig:brier_boxplot} in Appendix \ref{appendix:results}.

Finally, we compare disagreement levels before and after running Reconcile.  Consistent with Theorem 3.1, the final disagreement level is always lower than the initial disagreement level. Furthermore, in nearly all cases examined, the level of disagreement is reduced to zero following reconciliation (see Figure \ref{fig:disagreement_boxplot}). We can see that Reconcile drops the disagreement substantially across all datasets. When disagreements fall to zero, the models are now in complete agreement on all predictions.  
In summary, in all cases Reconcile converges quickly, increases the accuracy of both models, and reduces disagreement levels between the models. 

\begin{figure}
\centering
\includegraphics[width=1\columnwidth]{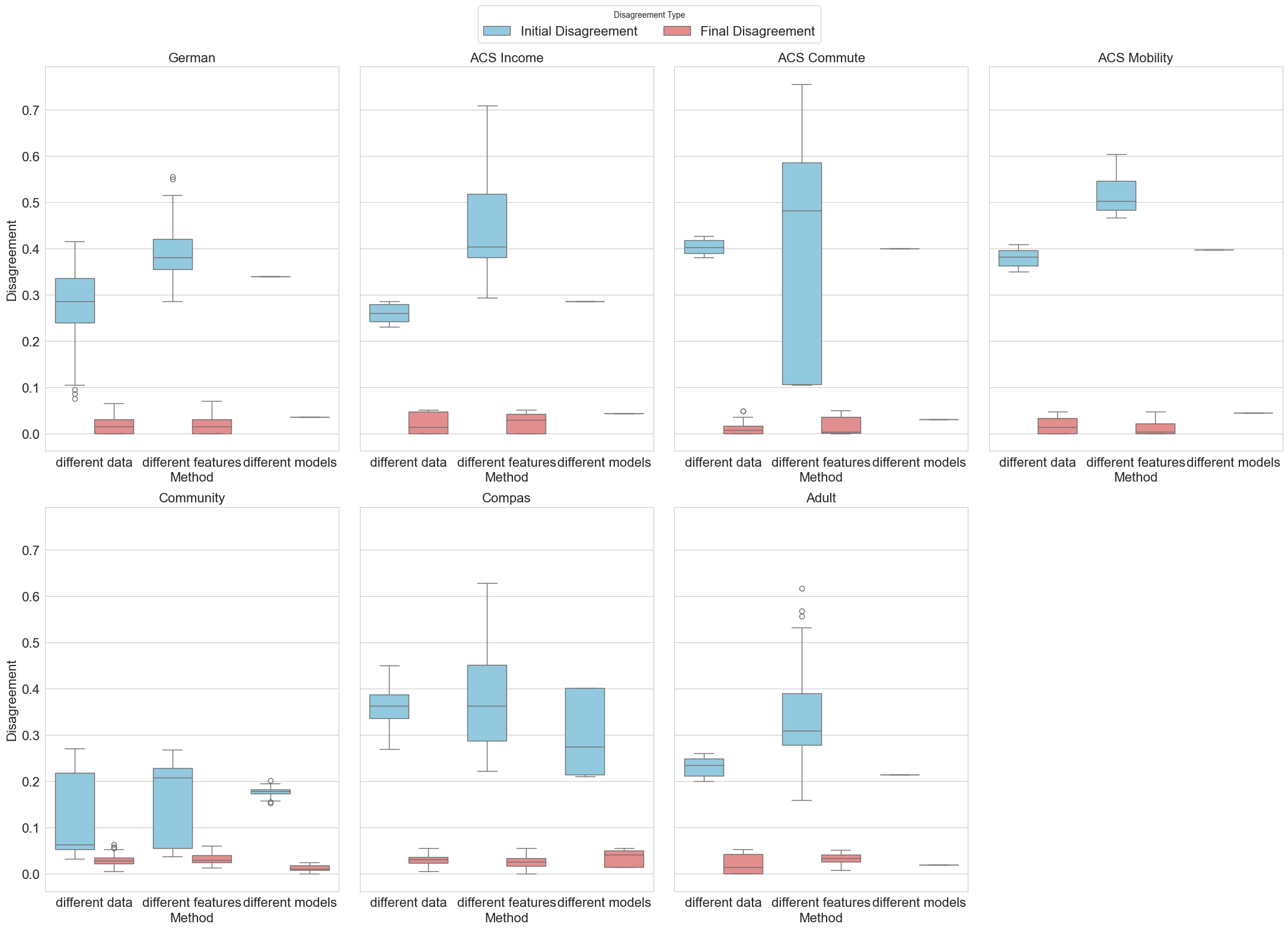} 
\caption{ Disagreement levels between $f_1$ and $f_2$ before and after Reconcile. Model construction follows methodologies described in Section \ref{sec:building models}. Model specifics can be found in Section \ref{sec:experiments Reconcile results}.}
\label{fig:disagreement_boxplot}
\end{figure}

\subsection{Reconcile for Model Class Aggregation} \label{sec:experiments Reconcile and MM}

We now turn our attention back to the MM literature and the question of how to determine final outcomes when faced with disagreeing predictions. In \citep{black2022model}, the authors advocate for a non-arbitrary method to choose between high-accuracy models. This involves creating a ``meta-rule'', which is a set of specified criteria that models must meet to be considered for selection. 
For instance, a meta-rule might dictate that a model must maintain a certain level of accuracy or exhibit equitable true positive rates across demographic groups.
They suggest three \emph{aggregation techniques} as a way to make decisions from a set of models $\M$:
\begin{enumerate}
    \item \textbf{Mode Aggregation}: The mode predictor $f$ aggregates models from the set $\M$ by outputting the majority vote over the models $m \in \M$ for each $x$.
    \item \textbf{Randomized Predictions}: With this approach, the decision maker applies the classifier $m^{\mathrm{rand}}$, which, for each instance $x$, chooses a model $m$ at random from the set $\M$ and outputs $m(x)$. 
    \item \textbf{Random Model Selection}: In random model selection, a model $m$ is sampled at random from the set $\M$ and is used consistently for all $x \in \X$.
\end{enumerate}

These techniques aim to reduce arbitrariness in model selection. In this section, we propose a fourth, novel non-arbitrary method for choosing a model between the models in $\M$ using the Reconcile algorithm.
As we discuss at the end of this subsection, the reconciliation process as a model aggregation technique presents far more benefits than that of non-arbitrariness. 

4. \textbf{Sequential Reconcile}: We construct a model $f$ from the class $\M$ as follows.
\textbf{(a)} Randomly select two models from the set $\M$ and run Reconcile on these models.
\textbf{(b)} Randomly choose one of the two reconciled models (or the one with the lower Brier score)\footnote{Recall that after running the Reconcile with parameter $\alpha$, the two output models do not present significant disagreement. Thus, for small values of $\alpha$, the two output models are essentially equivalent (and, in some cases, exactly equivalent).}
and randomly select another model from the remaining models in $\M$.
\textbf{(c)} Run Reconcile on the two selected models.
\textbf{(d)} Repeat steps (a) to (c) until all models in $\M$ have been considered for reconciliation. 

For this experiment, we generalize the mode aggregation method to mean aggregation in the case of a regression task (Communities and Crime dataset).
We define our meta-rule based on the cross-validated accuracy score. This score is derived from a 5-fold cross-validation method, where the model's performance in each of the 5 iterations is evaluated in terms of accuracy, which measures the proportion of correct predictions made by the model compared to the actual outcomes. The cross-validated accuracy score for each model is then calculated as the mean of these five accuracy measurements. Models that achieve a mean cross-validated accuracy score between 0.65 and 0.70 satisfy our meta-rule. We create the set $\M$ by training different classes of classifiers with different hyper-parameters and choosing the ones that satisfy our meta-rule. A detailed list of these classifiers is available in Appendix \ref{appendix:models}.
After building the set $\M$, we run a prediction task on the test set using the four aggregation methods and calculate the Brier score of the final resulting model.\footnote{Reconcile was applied to the validation data.} We repeated each experiment with each data set 100 times\footnote{A boxplot comparing methods for each dataset individually is available in Appendix \ref{appendix:results}, Figure \ref{fig:aggregate_boxplot}} and performed a pairwise t-test to assess whether Sequential Reconcile performs statistically better compared to the other three methods in all data sets. The results are displayed in Table \ref{tab:MM result}. These results show that Sequential Reconcile performs significantly better than the other proposed methods if the goal is to have the highest accuracy possible. As is the case in our results, the Mean Aggregation technique can in some cases perform better than sequential Reconcile in terms of Brier score.\footnote{One hypothesis is that applying the patch to continuous values could in some cases result in passing the correct prediction (note that this cannot happen in the case of a binary classification in which as long as you are going in the right direction, one can only get closer to the correct answer).}

\begin{table}[h]
\centering
\begin{tabular}{|p{2cm}|p{1.5cm}|p{2cm}|p{1.5cm}|p{1.5cm}|p{1.5cm}|}
\hline
Method \# & T-Stat. &  P-Value & Mean Diff & CI Lower & CI \,\,\,\,\, Upper \\ \hline
1 \,\,\,\,\,\,(Mode) & $-11.478$  & $4.09 \times 10^{-29}$ & $-0.032$  & $-0.038$  & $-0.027$ \\ \hline
1 \,\,\,\,\,\,(Mean) & $21.221$   & $1.35 \times 10^{-79}$ & $0.130$   & $0.126$   & $0.135$\\ \hline
2  & $-3.667$   & $2.55 \times 10^{-4}$ & $-0.012$  & $-0.019$  & $-0.006$\\ \hline
3 & $-3.537$   & $4.18 \times 10^{-4}$ & $-0.012$  & $-0.019$  & $-0.005$ \\ \hline
\end{tabular}
\caption{Pairwise t-test results comparing different methods to Sequential Reconcile across all datasets, with 95\% confidence intervals.}
\label{tab:MM result}
\end{table}
Each of these techniques provides a justifiable way to resolve multiplicity in different contexts. 
It has been shown that the mode aggregation technique reduces instability in predictions \citep{black2021selective}. 
In \citep{black2022model} the authors argue that there are plenty of applications where randomized predictions such as the ones described by aggregation methods 2 and 3 are undesirable; but in applications where decisions are low-stakes and repeated, this randomized sampling might give a person outcomes that better reflect the uncertainty contained in the model distribution. 
However, what is important to note is that the theoretical guarantees of the Reconcile algorithm provide far more benefits than those of non-arbitrariness.

In the previous solutions for mode aggregation, the models still do not agree on the final prediction for specific data points.
This can be seen as a \emph{compromise} procedure, where despite underlying disagreements, the decision is settled by, for example, a majority vote, requiring some models to yield to the prevailing outcome.
In the case of (sequential) Reconcile, however, we are aggregating through a ``consensus-reaching'' process, in which when each pair of models is contested, they both agree that at least one of the two models is falsifiable, and they both agree that the final models are at least as good as the initial ones (in terms of Brier score) while presenting no significant disagreement between them. 
To formalize this idea of a consensus-reaching process, we formally study the robustness and fairness properties of Reconcile as a model class aggregation algorithm next.

\subsubsection{Robustness of Sequential Reconcile} \label{sec:experiment:stability}
Another way to conceptualize the benefits of sequential Reconcile compared to the other aggregation methods is its stability. Unlike previous solutions for model aggregation, where even having one slightly bad model in the set $\M$ can have a significant negative impact on the aggregated model, Reconcile is robust, meaning the final predictions remain consistent and stable as long as there is at least one model in the set that is performing well enough. 

To demonstrate this empirically, we construct the set in the same way as described in Section~\ref{sec:experiments Reconcile and MM}. In this experiment, we progressively replace the models in the set with random predictors, starting with one replacement and continuing until all models are replaced.
 We focus on the mode aggregation method (or mean aggregation for regression tasks) and sequential Reconcile, as the other two methods (randomized prediction and random model selection) inherently exhibit arbitrary behavior. For each dataset, we generate a set of models and incrementally substitute random models for actual ones. The number of models in each set varies between datasets, as determined by the meta-rule specified in Section \ref{sec:experiments Reconcile and MM}. Figure \ref{fig:stability_heatmap} displays the results for a single run, showing that sequential Reconcile consistently leads to better predictions until all models are replaced with random predictors. In contrast, mode aggregation quickly reaches its upper bound for random predictors, requiring as few as 2 or 3 bad predictors to significantly degrade performance. We conducted the experiment 20 times and consistently observed similar patterns (results available in Appendix \ref{appendix:results}, Figure \ref{fig:stability_mean}). As the number of bad predictors (random models) in the set increases, the difference in MSE also increases, until all models are replaced and the performance of Sequential Reconcile declines. These results indicate that while Sequential Reconcile may not always outperform other aggregation methods (e.g., mean aggregation) in terms of Brier score and comes with higher computational costs, its ability to produce more stable predictions provides a strong justification for its use.
The formal justification of this result beyond the empirical demonstration is given in the following statement.

\begin{lemma}\label{lemma:brierscores}
    For any pair of models $f_1, f_2: \X \rightarrow [0,1]$, any dataset $D$, and any $\alpha, \epsilon>0$, Reconcile$(f_1, f_2)$ runs for $T = T_1+T_2$ many rounds and outputs a pair of models $(f_1^{T_1}, f_2^{T_2})$ such that both $B(f_1^{T_1})$ and $B(f_2^{T_2})$ are upper bounded by 
    \[
        \min \Big\{ B(f_1) - T_1 \cdot \frac{\alpha \epsilon^2}{16}, B(f_2) - T_2 \cdot \frac{\alpha \epsilon^2}{16}\Big\} +4\epsilon + 3\alpha.
    \]
\end{lemma}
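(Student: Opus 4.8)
The plan is to bootstrap from Theorem~\ref{thm:mainthm}, which already supplies most of the ingredients. Part~(2) of that theorem gives, for each output model, a bound in terms of \emph{its own} initial Brier score: writing $A \triangleq B(f_1) - T_1 \cdot \frac{\alpha\epsilon^2}{16}$ and $A' \triangleq B(f_2) - T_2 \cdot \frac{\alpha\epsilon^2}{16}$, we have $B(f_1^{T_1}) \le A$ and $B(f_2^{T_2}) \le A'$. The real content of the lemma is the \emph{cross} bound: that each output model is also controlled by the \emph{other} model's bound, up to the additive slack $4\epsilon + 3\alpha$. The key leverage for this is Part~(3) of Theorem~\ref{thm:mainthm}, which guarantees that the two returned models $\epsilon$-agree everywhere except on a set of mass less than $\alpha$.

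The heart of the argument is converting this pointwise agreement into closeness of Brier scores. I would write the difference of the two Brier scores as an expectation over $\D$ and apply the identity $(a-y)^2 - (b-y)^2 = (a-b)(a+b-2y)$ with $a = f_1^{T_1}(x)$ and $b = f_2^{T_2}(x)$. Since $a,b \in [0,1]$ and $y \in \{0,1\}$, the factor $|a+b-2y|$ is at most $2$. I would then split the domain at the disagreement set $U_{\epsilon}(f_1^{T_1}, f_2^{T_2})$: on its complement $|a-b| \le \epsilon$, contributing at most $2\epsilon$; on $U_{\epsilon}$ itself $|a-b| \le 1$ but the mass is less than $\alpha$, contributing at most $2\alpha$. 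Summing gives $|B(f_1^{T_1}) - B(f_2^{T_2})| \le 2\epsilon + 2\alpha$, comfortably inside the stated slack.

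With this proximity in hand the lemma follows by chaining inequalities. For $f_1^{T_1}$: directly $B(f_1^{T_1}) \le A$, and also $B(f_1^{T_1}) \le B(f_2^{T_2}) + (2\epsilon + 2\alpha) \le A' + 2\epsilon + 2\alpha$ by Part~(2) applied to $f_2$; hence $B(f_1^{T_1}) \le \min\{A, A'\} + 2\epsilon + 2\alpha$. The symmetric chain bounds $B(f_2^{T_2})$ by $\min\{A, A'\} + 2\epsilon + 2\alpha$ as well. Since $2\epsilon + 2\alpha \le 4\epsilon + 3\alpha$, both output scores are at most $\min\{A, A'\} + 4\epsilon + 3\alpha$, as claimed.

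The only delicate point—and it is minor—is the bookkeeping on the disagreement region: one must remember that Part~(3) bounds only its \emph{mass}, not the pointwise gap there, so the $2\alpha$ term relies on the crude uniform bound $|a-b| \le 1$ on $U_{\epsilon}$ rather than on $\epsilon$-closeness. Everything else is a routine expansion. Indeed the analysis yields slack $2\epsilon + 2\alpha$, so the stated $4\epsilon + 3\alpha$ leaves ample room and the conclusion is insensitive to the exact constants; the interpretive payoff is that a badly-scoring $f_2$ is ``rescued'' to within $O(\epsilon + \alpha)$ of the good bound $A$, which is precisely the robustness claim motivating the lemma.
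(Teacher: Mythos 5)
Your proposal is correct and follows essentially the same route as the paper's proof: split the domain at the final disagreement set, use $\epsilon$-closeness off that set and the mass bound $\mu(U_\epsilon)<\alpha$ on it to show $|B(f_1^{T_1})-B(f_2^{T_2})|$ is small, then chain with part (2) of Theorem~\ref{thm:mainthm}. Your use of the factored identity $(a-y)^2-(b-y)^2=(a-b)(a+b-2y)$ even yields the slightly tighter slack $2\epsilon+2\alpha$ where the paper's term-by-term expansion gives $4\epsilon+3\alpha$, but this is a cosmetic improvement rather than a different argument.
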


\begin{proof}
    We write the proof for an input dataset $D = \{(x_i, y_i), \ldots, (x_n, y_n)\}$, given that this corresponds to the set-up of our experiments (the proof carries forward directly for an arbitrary distribution $\D$).
    We relate the Brier scores of $f_1^{T_1}$ and $f_2^{T_2}$ by separating between the area of agreement (i.e., $x \in \X \setminus U_{\epsilon}$) and the area of disagreement (i.e., $x \in U_{\epsilon}$).
    In the agreement area, we know that $f^{T_1}_1$ and $f^{T_2}_2$ satisfy $|f^{T_1}_1(x)-f^{T_2}_2(x)|\leq \epsilon$ for all $x$, given that this condition is satisfied for $f_1, f_2$, and hence is also satisfied by $f_1 = f_1^{T_1}$, $f_2 = f_2^{T_2}$ over the agreement area, given that the Reconcile algorithms only modifies the predictions of the classifiers over the disagreement area.
    In the area of disagreement, the difference between $f^{T_1}_1$ and $f^{T_2}_2$ can be as large as 1 (since the range of the models is $[0,1]$).
    However, by the guarantee of the Reconcile algorithm (Theorem~\ref{thm:mainthm}), the disagreement area is at most an $\alpha$ fraction of the domain.
    \begin{align*}
        \big| B(f_1^{T_1}) - B(f_2^{T_2})\big| 
        &\leq  \dfrac{1}{n} \cdot \Big| \sum_{i : x_i \in \mathcal{X} \setminus U_{\epsilon}} (f_1^{T_1}(x_i) - y_i)^2 - (f_2^{T_2}(x_i) - y_i)^2
        \Big|  \\
        &\quad  + \dfrac{1}{n} \cdot \Big| \sum_{i : x_i \in U_{\epsilon}} (f_1^{T_1}(x_i) - y_i)^2 - (f_2^{T_2}(x_i) - y_i)^2 \Big|
        \\
        &\leq \dfrac{1}{n} \cdot \Big( \sum_{i : x_i \in \mathcal{X} \setminus U_{\epsilon}} \big| (f_1^{T_1})^2(x_i) - (f_2^{T_2})^2(x_i) \big|+ 2y_i \cdot \big| f^{T_2}_2(x_i) - f^{T_1}_1(x_i) \big|  \Big)\\
        &\quad + \dfrac{1}{n} \cdot \Big( \sum_{i : x_i \in U_{\epsilon}} \big| (f_1^{T_1})^2(x_i) - (f_2^{T_2})^2(x_i) \big|+ 2y_i \cdot \big| f^{T_2}_2(x_i) - f^{T_1}_1(x_i) \big|  \Big) \\
        &\leq \dfrac{(2\epsilon + 2\epsilon) n}{n} + \dfrac{3 \cdot |U_{\epsilon}|}{n}  \\
        &\leq \dfrac{4 \epsilon n + 3 \alpha n}{n} = 4 \epsilon + 3 \alpha.
    \end{align*}
\end{proof}

 In particular, if $B(f_1)$ is initially low but $B(f_2)$ is not (i.e., $B(f_1) \ll B(f_2)$), this lemma ensures that the reconciled model $f_2^{T_2}$ after calling the Reconcile algorithm on $(f_1, f_2)$ is accurate as well, since by Lemma~\ref{lemma:brierscores},  $B(f_2^{T_2}) \leq B(f_1) + 4\epsilon + 3\alpha$.
(Lemma~\ref{lemma:brierscores} shows a tighter bound because the Brier score of $f_1$ can also have improved after reconciling even if its initial Brier score was much better than that of $f_2$).

\paragraph{Order matters for sequential applications of Reconcile.}
We point out that the ordering of reconciliation matters both when we use sequential Reconcile as a model aggregation technique and for producing a new class $\M'$.
In the first case, where we build a final model $f$ by applying sequential Reconcile to $\M$, a natural question is: does the order in which we process the models in $\M$ change the Brier score of the final model $f$?
The answer, as we have observed empirically, is yes. 
For example, in the case of the Adult dataset, following the same set-up as the one described in Section~\ref{sec:experiments Reconcile and MM}, one ordering yielded a final MSE of 0.155, whereas another run yielded a final MSE of 0.146. 

On a theoretical level, as shown in \cite[Lemma 3.2]{Roth2022-sd}, each update in the Reconcile algorithm decreases the Brier score of the model by exactly $B(f_t, \D) - B(f_{t+1}, \D) = \mu(g_t) \cdot \Delta_t^2$,
where $g_t$ is one of $U^{>}(f_1, f_2), U^{<}(f_1, f_2)$.
The reason for why the final improvement of Brier scores is of $T_1 \cdot O(\alpha \epsilon^2)$ and $T_2 \cdot O(\alpha \epsilon^2)$, respectively (as shown in Theorem 2.1) is because $\mu(g_t) \cdot \Delta_t^2\geq O(\alpha \epsilon^2)$ for all $t$ (see Appendix~\ref{appendix-sec:reconcilealgo} for a definition of $\Delta_t$).
Thus, the optimal ordering of the models that yields the lowest final Brier score is the one that maximizes $\sum_t^T \mu(g_t) \cdot \Delta_t^2$.

Similarly, when we use Reconcile to build a class $\M'$ from $\M$ in order to maximize the disagreement between models in $\M'$ (as measured by the different metrics that we discuss in Section~\ref{sec:experiments Reconcile and MM}), the final numbers also depend on the order in which we process the models (for all metrics quantifying disagreement within $\M'$).
For example, in the case of the COMPAS dataset, we find that the ambiguity and the discrepancy can differ by more than $0.1$ between some pairs of initial orderings (and similarly for the other metrics).

It is also important to note that, unlike other methods for ensembling models, Reconcile can be run on \emph{any} two input models, with no requirements whatsoever on how ``good'' the two initial models should be.

\begin{figure}
\centering
\includegraphics[width=1\columnwidth]{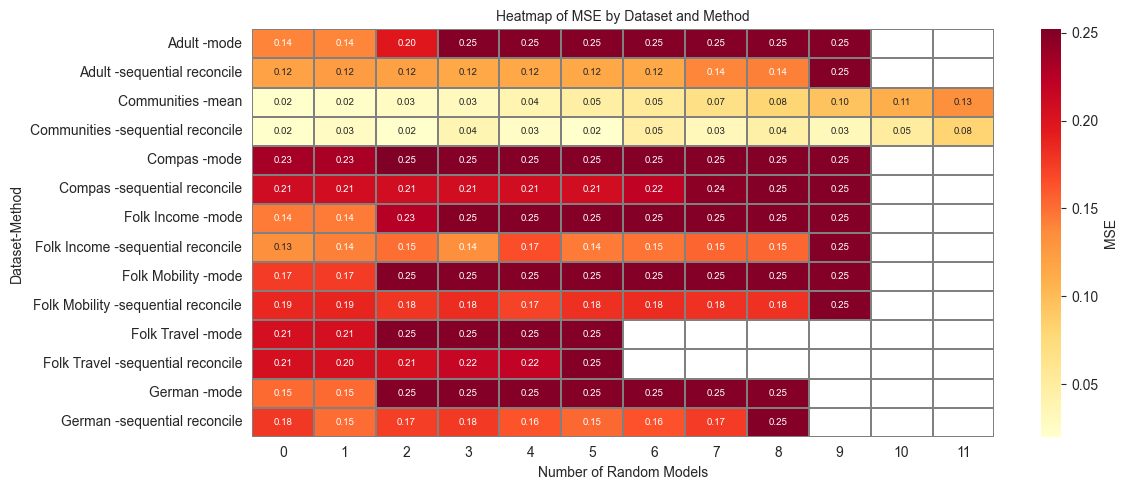} 
\caption{Heatmap showing the Mean Squared Error (MSE) across different numbers of random models for each dataset and aggregation method.}
\label{fig:stability_heatmap}
\end{figure}

\subsubsection{Fairness of Reconcile} \label{sec:fairness}
We further demonstrate the benefits of using Reconcile as a model aggregation method by studying the fairness properties of Reconcile.
Since in this work we assume that $f^*$ corresponds to the ground-truth, we view fairness as accuracy: an unfair model is one that has low Brier score over a minority group.
Here we demonstrate that if a model is fair on a (large enough, as measured by $\alpha$) minority group, then reconciling it with a model that is unfair over the group will ensure that the resulting reconciled model is more accurate on the minority group.
 
Formally, let $B(f)|_{P}$ denote the Brier score of predictor $f$ restricted to subset $P \subseteq \X$ of the domain; in fairness applications we are interested in ensuring that $B(f)|_{P}$ is low for all minority groups $P$:

\begin{lemma}\label{lemma:fairness}
    Let $P \subseteq \X$ be a subgroup of the domain such that $\mu(P) \geq \sqrt{4\epsilon+3\alpha}$ for some given $\alpha>0$.
    Let $f_1, f_2: \X \rightarrow [0,1]$ be two models such that $B(f_1)|_{P} \leq B(f_2)|_{P}$, and let $f_1^{T_1}, f_2^{T_2}$ denote the final models after running Reconcile$(f_1, f_2)$.
    Then, $B(f_2^{T_2})|_P \leq B(f_1)|_{P} + \sqrt{4\epsilon +3\alpha}$.
\end{lemma}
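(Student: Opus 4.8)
The plan is to localize the argument of Lemma~\ref{lemma:brierscores} to the subgroup $P$, working with the conditional (normalized by $|P|$) Brier score $B(f)|_P = \frac{1}{|P|}\sum_{i : x_i \in P}(f(x_i)-y_i)^2$. Since reconciliation only modifies predictions on the disagreement region, the natural quantity to control is the gap between the two \emph{final} models restricted to $P$, namely $B(f_2^{T_2})|_P - B(f_1^{T_1})|_P$; the hypothesis $B(f_1)|_P \le B(f_2)|_P$ tells us that $f_1^{T_1}$ is the correct reference against which to compare the repaired model $f_2^{T_2}$. As in Lemma~\ref{lemma:brierscores}, I would expand the difference of squares $(f_2^{T_2}(x_i)-y_i)^2 - (f_1^{T_1}(x_i)-y_i)^2$ and split the sum over $P$ into the agreement part $P \cap (\X \setminus U_\epsilon)$ and the disagreement part $P \cap U_\epsilon$, where $U_\epsilon = U_\epsilon(f_1^{T_1}, f_2^{T_2})$ is the final disagreement region.

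On the agreement part, $|f_1^{T_1}(x_i)-f_2^{T_2}(x_i)| \le \epsilon$, so each term contributes at most $O(\epsilon)$ and the agreement sum is at most $O(\epsilon)\,|P|$, contributing $O(\epsilon)$ after dividing by $|P|$. On the disagreement part each term is only bounded by a constant, but the number of indices is at most $|P \cap U_\epsilon| \le |U_\epsilon| < \alpha n$ by property (3) of Theorem~\ref{thm:mainthm}; dividing by $|P| = \mu(P)\, n$ this contributes at most $3\alpha/\mu(P)$. This is precisely where the hypothesis enters: since $\mu(P) \ge \sqrt{4\epsilon+3\alpha}$ and $3\alpha \le 4\epsilon + 3\alpha$, we obtain $3\alpha/\mu(P) \le (4\epsilon+3\alpha)/\sqrt{4\epsilon+3\alpha} = \sqrt{4\epsilon+3\alpha}$, which produces the square-root term in the statement, and the residual $O(\epsilon)$ from the agreement region is absorbed into the same quantity. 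This yields $B(f_2^{T_2})|_P \le B(f_1^{T_1})|_P + \sqrt{4\epsilon+3\alpha}$.

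The main obstacle is the final step: replacing $B(f_1^{T_1})|_P$ by $B(f_1)|_P$, i.e.\ passing from the reconciled copy of $f_1$ back to the original model. This does not follow directly from Theorem~\ref{thm:mainthm}, which only guarantees that reconciliation improves the \emph{global} Brier score of $f_1$; a mean-consistency patch applied on a group $g$ can in principle increase the Brier score of $f_1$ on the sub-subgroup $P \cap g$. I would close this gap by arguing that, because $f_1$ is already accurate on $P$, any update the algorithm applies to $f_1$ either leaves its predictions on $P$ essentially unchanged or cannot inflate $B(f_1)|_P$ beyond the error budget already carried by the square-root term, so that $B(f_1^{T_1})|_P \le B(f_1)|_P$ up to lower-order terms. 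The remaining work --- tracking the exact constants in the difference-of-squares expansion (the paper's looser bounds give per-point contributions $4\epsilon$ and $3$ on the two regions) and confirming that they telescope into the single term $\sqrt{4\epsilon+3\alpha}$ --- is routine bookkeeping entirely analogous to Lemma~\ref{lemma:brierscores}.
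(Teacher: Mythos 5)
Your proposal is correct and follows essentially the same route as the paper: the paper likewise derives the bound from Lemma~\ref{lemma:brierscores} by assuming the worst case in which the global gap $|B(f_1^{T_1})-B(f_2^{T_2})|\leq 4\epsilon+3\alpha$ concentrates in $P$, divides by $\mu(P)$, and invokes $\mu(P)\geq\sqrt{4\epsilon+3\alpha}$ (your direct re-run of the agreement/disagreement decomposition over $P$ is a slightly sharper version of the same step, since it avoids dividing the $4\epsilon$ term by $\mu(P)$). The one obstacle you flag --- replacing $B(f_1^{T_1})|_P$ by $B(f_1)|_P$, which does not follow from the global guarantee $B(f_1^{T_1})\leq B(f_1)-T_1\cdot\frac{\alpha\epsilon^2}{16}$ --- is equally unaddressed in the paper's own proof, which simply cites that global inequality and asserts the result follows, so you are not missing anything the paper supplies.
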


\begin{proof}
    By Lemma~\ref{lemma:brierscores}, we know that $| B(f_1^{T_1}) - B(f_2^{T_2})| \leq 4\epsilon + 3\alpha$.
    In the worst case, this difference in Brier score is concentrated within $P$, such that 
    \[
        \big| B(f_1^{T_1})|_P - B(f_2^{T_2})\big|_P \leq \frac{4\epsilon + 3\alpha}{\mu(P)}.
    \]
    Therefore, by imposing that the minority group be somewhat large; in particular, $\mu(P) \geq \sqrt{4\epsilon +3\alpha}$, and recalling that $B(f_1^{T_1}) \leq B(f_1) - T_1 \cdot \frac{\alpha \epsilon^2}{16}$, the result follows.
\end{proof}

We note that in practice, however, it is unlikely that the few number of disagreements left after running the Reconcile algorithm are concentrated in the subgroup $P$. 

Nevertheless, if $f_1$ does initially well in subgroup $P$ but $f_2$ is highly inaccurate within $P$, reconciling these two models will result in the improvement of $f_2$'s accuracy within (minority) group $P$. 
This relevant property is not necessarily true for other aggregation methods such as, e.g., mean aggregation.
Both lemmas on the stability and fairness properties of Reconcile naturally extend to a class of models, and they generalize to the entire distribution as in \citep{Roth2022-sd}.

To empirically evaluate the fairness properties of Reconcile, we design a scenario where one of the two models performs poorly on a specific subgroup. Using the dataset's race information, we define two subgroups: the majority group, which consists of individuals from the largest racial group in terms of population size, and the minority group, which includes individuals from the second-largest racial group (these two subgroups do not encompass the entire dataset and might be different racial groups in different datasets). After training the models,\footnote{Information on the models is available in Appendix \ref{appendix:models}.} we assign random predictions to the minority subgroup for one of the models. The results of this experiment are presented in Figure \ref{fig:fairness}. As shown, consistent with Lemma~\ref{lemma:fairness}, the final predictions for both models after applying Reconcile demonstrate that the model with low accuracy on the minority group now performs almost as well as the better-performing model. This experiment shows how Reconcile contributes to improving fairness. If one of the models performs reasonably well on the minority group—for instance, if it was specifically trained on this group—the final reconciled predictions show noticeable improvement for this subgroup.

\begin{figure}
\centering
\includegraphics[width=0.9\columnwidth]{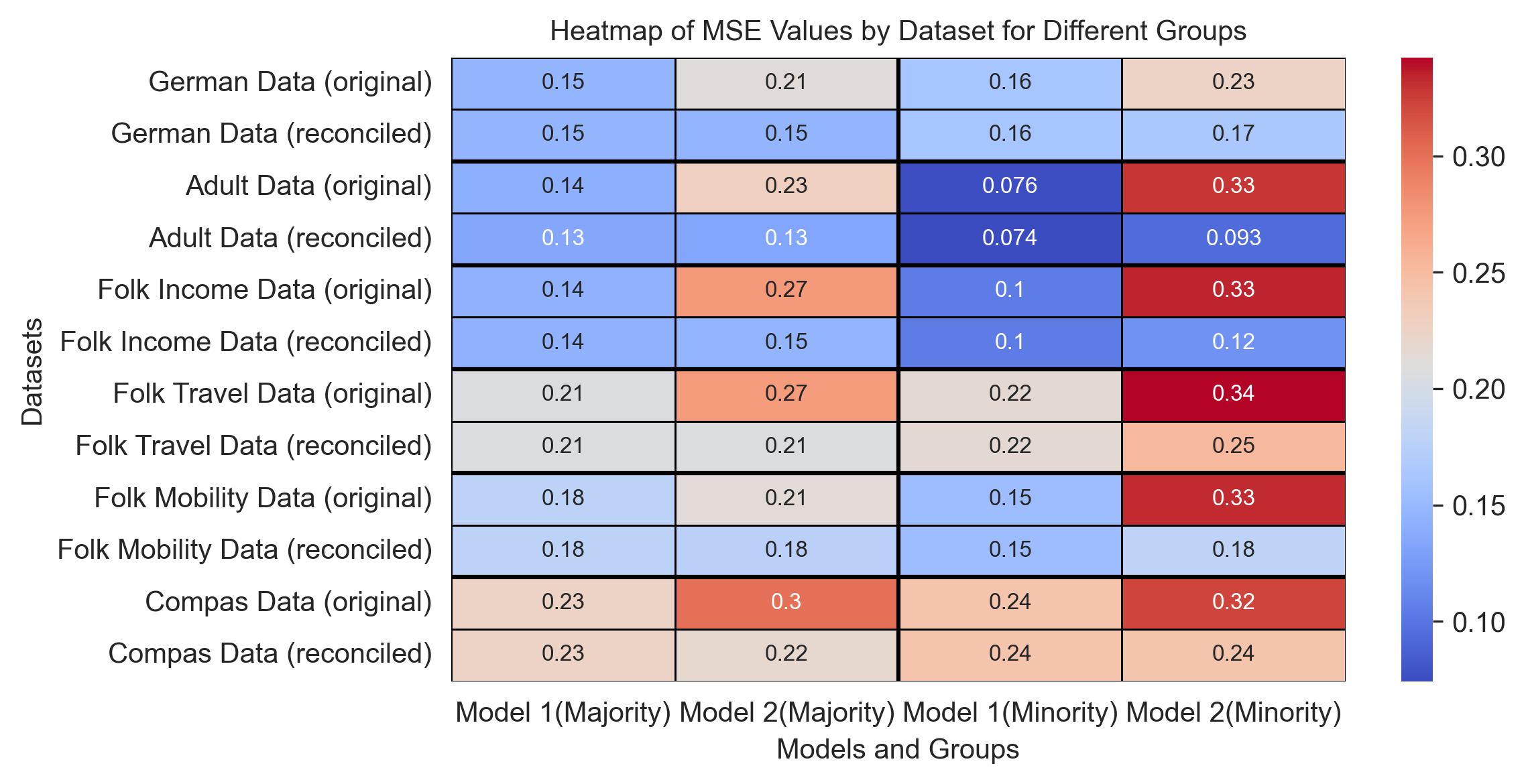} 
\caption{Heatmap showing MSE values for majority and minority race groups across two models before and after applying Reconcile, for all datasets. In this experiment, random predictions were assigned to the second model for the minority subgroup prior to Reconcile to simulate a scenario where one model underperforms on a specific subgroup.}
\label{fig:fairness}
\end{figure}

\subsection{Using Reconcile to Reduce Disagreements Within a Class $\M$} \label{sec:experiments solution}
 To be able to analyze how Reconcile performs in terms of reducing the severity of predictive multiplicity, we use the four following metrics to quantify the disagreement on predictions amongst models in $\M$.
 These metrics originate from the predictive multiplicity literature, and so analyzing how Reconcile impacts these metrics is crucial to understanding how Reconcile relates to the MM literature.

 \begin{enumerate}
     \item Variance in predictions: For each instance $x_i$ in $D$, we calculate the variance (\(\sigma^2(x_i)\)) of the predictions made by all models in the set $\M$. We then compute and report the statistics (e.g., min, max, mean, standard deviation) of these variances across the entire $D$.
     \item Ambiguity: We generalize the definition of ambiguity proposed in the predictive multiplicity literature \citep{hsu2022rashomon} to extend it to real-valued predictions, rather than limiting it to binary outcomes only.
The \emph{ambiguity} \( \alpha(x) \) for real-valued predictions is defined as the average of the maximum absolute differences in predictions across all instances: $\phi(D) = \frac{1}{n} \sum_{i=1}^{n} |\max_{f \in \mathcal{M}}\hat{y}_f(x_i) - \min_{f \in \mathcal{M}} \hat{y}_f(x_i)|$.
    \item Discrepancy: We similarly generalize the definition of discrepancy from \citep{hsu2022rashomon} to accommodate real-valued predictions. 
     The \emph{discrepancy} \( \delta(D) \) is defined as:
        $\delta(D) = \max_{(f_1, f_2) \in \mathcal{M}} \left\{ \frac{1}{n} \sum_{i=1}^{n} |\hat{y}_{f_1}(x_i) - \hat{y}_{f_2}(x_i)| \right\}$.
    \item Disagreement Set Probability Mass: For all 2-combinations of models \( f_1 \) and \( f_2 \) in $\M$, we calculate \( \mu(U_{0.2}(f_1, f_2))) \)). We then compute and report the statistics (min, max, mean, standard deviation) of these probability masses across the entire set $\M$.
 \end{enumerate}
 
 Another challenge that arises when applying Reconcile to models in the set $\M$ is determining how models from the set should be selected for reconciliation.
 Namely, recall that the Reconcile algorithm takes two models at a time as input, and thus there are various ways in which we can process the set $\M$.
 For our study, we propose the following 4 methods to create the set $\M'$ from reconciled models:
 \begin{enumerate}
 \item[a.] Run Reconcile on all possible pairs of models from $\M$. Then, \( |\M'_a| \) = \( |\M| \choose 2 \). 
 \item[b.] Run Reconcile on all possible pairs of models from $\M$. We know from Section \ref{sec:experiments Reconcile results} that in most cases the two models are identical after Reconcile so we choose only the first model after Reconcile.
 Then, \( |\M'_b| = {|\M| \choose 2}/2 \).
 \item[c.] Run Reconcile on all possible pairs of models from $\M$ then choose $|\M|$ models randomly.
 Thus, \( |\M'_c| = |\M| \).
 \item[d.] Randomly select two models from the set $\M$ and apply Reconcile to these models. Randomly choose one of the two reconciled models and add it to a new set, $\M'$. Continue the process by randomly selecting another model from the remaining models in $\M$. Run Reconcile between this selected model and the most recently added model to $\M'$. Repeat until all models in $\M$ have been chosen.
 Thus, \( |\M'_d| = |\M| \).
 \end{enumerate}

For our empirical study, we create the set $\M$ as specified in Section \ref{sec:experiments Reconcile and MM}. We define our meta-rule based on the cross-validated accuracy score and create the
set $\M$ by training different classes of classifiers with different hyper-parameters and choosing the ones that satisfy our meta-rule.  A detailed list of these models is available in Appendix \ref{appendix:models}.
We then apply Reconcile on models in $\M$ using the approaches we outlined. To understand the effectiveness of each method, we perform a Tukey's HSD (Honestly Significant Difference) \citep{tukey1949comparing} test on ambiguity and discrepancy within the sets created with each method. This test is a well-known single-step multiple comparison procedure that can be used to correctly interpret the statistical significance of the difference between means that have been selected for comparison. The results are shown in Tables \ref{tab:exp 3 Tukey ambiguity} for ambiguity.
Based on the results shown in Table \ref{tab:exp 3 Tukey ambiguity} significant differences were found between all group pairs. The p-adjusted values (p-adj) are all 0.0 \footnote{The exact p-values range from \(10^{-17}\) to \(10^{-145}\) and are rounded to 0.}, suggesting that all pairwise differences in means are statistically significant. The mean differences between the groups range from 0.038 to 0.202, with $\M$ consistently showing larger differences when compared to the other groups that went through some form of Reconcile. The direction of the mean differences suggests that $\M$ has larger values of ambiguity compared to the other groups. When comparing between groups $\M’_a$,$\M’_b$,$\M’_c$, and $\M’_d$, we can see that the lowest ambiguity values seem to be in $\M’_d$, the set that is the produced by sequential reconcile. A similar pattern is observed across the remaining three metrics we defined. We provide a detailed discussion of these metrics in Appendix \ref{appendix:results:within-class}.

\begin{table}[h]
\centering
\begin{tabular}{|c|c|c|c|c|c|}
\hline
Set 1 & Set 2 & Mean Diff & p-adj & Upper & Lower \\ \hline
$\M$   &$\M'_a$ & $0.046$ & $0.0$ & $0.057$ & $0.034$ \\ \hline
$\M$   &$\M'_b$ & $0.105$ & $0.0$ & $0.116$ & $0.093$ \\ \hline
$\M$   &$\M'_c$ & $0.143$ & $0.0$ & $0.155$ & $0.131$ \\ \hline
$\M$   &$\M'_d$ & $0.202$ & $0.0$ & $0.213$ & $0.190$ \\ \hline
$\M'_a$   &$\M'_b$ & $0.059$ & $0.0$ & $0.071$ & $0.048$ \\ \hline
$\M'_a$   &$\M'_c$ & $0.097$ & $0.0$ & $0.109$ & $0.086$ \\ \hline
$\M'_a$   &$\M'_d$ & $0.157$ & $0.0$ & $0.168$ & $0.145$ \\ \hline
$\M'_b$   &$\M'_c$ & $0.038$ & $0.0$ & $0.049$ & $0.026$ \\ \hline
$\M'_b$   &$\M'_d$ & $0.097$ & $0.0$ & $0.109$ & $0.086$ \\ \hline
$\M'_c$   &$\M'_d$ & $0.059$ & $0.0$ & $0.071$ & $0.048$ \\ \hline
\end{tabular}
\caption{Multiple Comparison of Mean values for ambiguity across all studies for different sets - Tukey HSD, FWER=0.05}
\label{tab:exp 3 Tukey ambiguity}
\end{table}

These results support the effectiveness of Reconcile in reducing model disagreement and increasing consistency across predictions of the models in set $\M$ regardless of the approach used to select models for Reconcile. Despite methods (a) and (b) increasing the number of models in the set, the reconciled models perform better in terms of predictive multiplicity. Method (c) performs slightly better than methods (a) and (b), possibly because the size of $\M'$ matches that of $\M$. Method (d), which performs Reconcile in a sequential manner reveals more consistency across predictions of models in the set while being the least computationally expensive. This highlights that the same or even higher consistency can be achieved without applying Reconcile to all possible model pairs, which scales quadratically as the set size grows. Instead, sequential Reconcile offers a more efficient alternative, scaling linearly with the set size.

\subsection{Relationship Between Reconcile and Multiaccuracy}
Lastly, we clarify the relationship between the Reconcile algorithm and the recent literature on multigroup fairness, which has attracted increasing interest in the past couple of years.
The definitions of multiaccuracy (MA) and multicalibration (MC) were first defined as a mathematical definition of fairness by \cite{hebert2018multicalibration}: given a pre-specified collection of groups $\mathcal{G}$, we want to build a predictor $f$ that satisfies a particular relationship with respect to the true probabilities $f^*$ for \emph{every} group $g \in \G$.
More specifically, one variant of the definition of a multiaccurate predictor is the following:\footnote{The definition is usually stated with $f^*$ instead of $y$, but we can relate the two expression using concentration inequalities. The definition is also usually stated with an absolute value instead of a squaring. Different papers deal with small groups $g$ in different ways: we can either condition on $g$ and then incorporate $\mu(g)$ into the error parameter, or multiply by $g$ inside of the expectation term.}
\begin{definition}\label{def:MA}
    Given a ground-truth model $f^*: \X \rightarrow [0,1]$, true labels $y \sim Bern(f^*)$, a distribution $\D$ on $\X \times \{0, 1\}$, a collection of groups $\G$ on $\X$,
    and $\beta>0$, we say that a predictor $f: \X \rightarrow [0,1]$ is $\beta$-\emph{multiaccurate} if, for all $g \in \G$,
    \[
        \Big( \E_{(x, y) \sim \D}[f(x) - y | g(x) = 1] \Big)^2 \leq \dfrac{\beta}{\mu(g)}.
    \]
\end{definition}

Multiaccuracy is a definition; what \cite{hebert2018multicalibration} first showed is that, given any collection of groups $\G$, ground-truth $f^*$, and $\beta>0$, we can always build a $\beta$-MA predictor efficiently (and similarly for the stronger notion of multicalibration).
Their algorithm is as follows: we start with the trivial predictor $f=0$, and we iteratively find some group $g \in \G$ that witnesses a violation of $\beta$-MA. 
Next, we update the predictor $f$ on all the points $x$ such that $g(x)=1$ by patching it with the witness $g$ as follows:
\[
    f(x) \leftarrow f(x) + \Delta, \,\, 
    \Delta := \E_{(x, y) \sim \D}[f(x) - y | g(x) = 1].
\]
The MA algorithm terminates when no more groups in $g \in \G$ that witness a MA violation can be found.
In order to show that the MA algorithm terminates within not too many steps, we use a potential argument with the potential function $\Phi = \E_{\D}[(f^*-f_t)^2]$ and show that each patching iteration improves $\Phi$ by at least $O(\beta^2)$. 
Thus, the algorithm terminates within $O(1/\beta^2)$ iterations.

The MA algorithm is related to the Reconcile algorithm as follows.
The Reconcile algorithm follows exactly the same paradigm as the MA algorithm: find a group that witnesses a violation of the guarantee that we are trying to satisfy, patch the current $f_t$ by updating the values $x$ inside the witnessing group by adding $\Delta$, and then show that this update has improved the Brier score of $f$ by at least some quantity that depends on the initial parameters.
(Note that the potential function that is used in the MA algorithms corresponds precisely to the Brier score of the model $f$.)

In the case of Reconcile, we do not have a pre-specified collection of groups $\G$. 
Instead, at each step of the algorithm, a group gets naturally defined: namely, the subset $U^{>}_{\epsilon}, U^{>}_{\epsilon}$ that witnesses a violation of $\alpha$-group conditional mean consistency.
As in the case of MA, the Reconcile algorithm works because we can show that patching $f_t$ with $\Delta$ yields an improvement of the Brier score of at least $O(\alpha \epsilon^2)$.
As in MA, Reconcile only patches the model inside of the witnessing group $g = U^{>}_{\epsilon}, U^{<}_{\epsilon}$.\footnote{Indeed, Reconcile never modifies the values of either $f_1$ or $f_2$ in the area where the two models already agree from the beginning; i.e., in $\X \setminus U_{\epsilon}$. Thus, the improvement of the Brier score in the models comes entirely from the initially disagreeing area. This is also why Reconcile only yields MA with respect to a collection of groups that are all defined over the initial $U_{\epsilon}$, as shown in Lemma~\ref{lemma:MAReconcile}.}
Note that the subsets $U_{\epsilon}$ evolve at each iteration $t$; we can thus write $U_{\epsilon, t}$.
A full run of the Reconcile thus defines a collection of groups $\G$ on $\X$ given by the violation-witnessing group at each iteration $t$; i.e., $\G = \{U^{>, <}_{\epsilon, t}\}_{t \in T}$.
This 1-to-1 matching between the MA algorithm and the Reconcile algorithm thus leads to the following formal relationship:

\begin{lemma}[Relationship between MA and Reconcile]\label{lemma:MAReconcile}
    Given a ground-truth model $f^*: \X \rightarrow [0,1]$, true labels $y \sim Bern(f^*)$, a distribution $\D$ on $\X \times \{0,1\}$, a pair of models $f_1, f_2: \X \rightarrow [0,1]$, and parameters $\alpha, \epsilon >0$, let $f$ correspond to the output of Reconcile$(f_1, f_2)$ after $T$ rounds with parameters $\alpha, \epsilon, \D$.
    Then, $f$ is $(\alpha\epsilon^2)$-multiaccurate with respect to the collection of groups $\G = \{U^{>, <}_{\epsilon, t}\}_{t \in T}$.
\end{lemma}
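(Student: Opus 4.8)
The plan is to verify the defining inequality of Definition~\ref{def:MA} directly for every group $g \in \G$, exploiting the step-by-step correspondence between Reconcile and the multiaccuracy patching algorithm established in the preceding discussion. Fix a group $g = U^{>}_{\epsilon,t}$ or $g = U^{<}_{\epsilon,t}$ that was selected as a violation witness at some round $t$, and recall that the quantity $\Delta_t = \E_{\D}[f_t(x) - y \mid g(x)=1]$ driving the patch is exactly the conditional bias appearing (squared) in the MA condition. So I would reduce the lemma to controlling $\mu(g)\cdot(\E_{\D}[f(x)-y\mid g(x)=1])^2$ for the final model $f$ on each such $g$, and show it is at most $\alpha\epsilon^2$, which is equivalent to the required bound $(\E_{\D}[f-y\mid g])^2 \le \alpha\epsilon^2/\mu(g)$.

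First I would record the two quantitative facts that make $g$ eligible as a witness: it has mass $\mu(g)\ge\alpha$ (we only consider disagreement regions of mass at least $\alpha$), and, because $f_1$ and $f_2$ differ by more than $\epsilon$ throughout $U_{\epsilon}$, at least one of the two models must have conditional bias exceeding $\epsilon/2$ on $g$, so the patch magnitude satisfies $|\Delta_t|>\epsilon/2$; this is precisely why the associated Brier improvement $\mu(g)\Delta_t^2$ is at least $\alpha\epsilon^2/16$, as quoted from \cite[Lemma 3.2]{Roth2022-sd}. Next I would use that the patch at round $t$ zeroes out the conditional bias on $g$, i.e.\ $\E_{\D}[f_{t+1}(x)-y\mid g(x)=1]=0$, so immediately after its own patch the model trivially satisfies the MA inequality on $g$ with room to spare. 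Finally I would invoke the global potential structure: the total improvement $\sum_t \mu(g_t)\Delta_t^2$ is bounded by $B(f_1,\D)+B(f_2,\D)\le 2$, which both caps the number of rounds (recovering part (1) of Theorem~\ref{thm:mainthm}) and, together with the per-group consistency just established, lets me read off the $(\alpha\epsilon^2)$-MA guarantee over the full collection of witness groups.

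The crux, and the step I expect to require the most care, is that, unlike the vanilla MA algorithm which re-checks a \emph{fixed} collection and halts only when no group is violated, here $\G$ is generated \emph{adaptively} and the witness groups at different rounds overlap: they are all subsets of the initial $U_{\epsilon}$, but they are neither nested nor disjoint. Consequently a patch at a later round $t'>t$ can reintroduce conditional bias on an earlier group $g=U_{\epsilon,t}$, so the consistency established at round $t+1$ need not persist to the final model $f$. The honest resolution is either (i) to bound the accumulated drift $\sum_{t'>t}\Delta_{t'}\,\mu(g\cap g_{t'})/\mu(g)$ and absorb it into the gap between the per-step threshold $\alpha\epsilon^2/16$ and the looser stated parameter $\alpha\epsilon^2$, or (ii) to use that at termination $\mu(U_{\epsilon,T})<\alpha$ (part (3) of Theorem~\ref{thm:mainthm}) forces the residual disagreement, and hence the residual bias witnessable by these groups, below the MA tolerance. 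I would make the termination-mass argument precise, since the factor-of-$16$ slack between the per-step improvement bound and the target $\alpha\epsilon^2$ is exactly the headroom available to absorb this interference.
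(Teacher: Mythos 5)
The paper offers no formal proof of Lemma~\ref{lemma:MAReconcile}: it is asserted as a consequence of the informal ``1-to-1 matching'' between Reconcile's patching step and the multiaccuracy boosting step described in the paragraphs preceding it. Your proposal starts from that same correspondence, and its real contribution is that it names precisely the point where the correspondence breaks down: the MA algorithm certifies its guarantee through its \emph{termination condition} (it halts only when no group in the fixed collection $\G$ still witnesses a violation), whereas Reconcile halts when $\mu(U_{\epsilon,T})<\alpha$ and never re-audits the earlier witness groups $U^{>,<}_{\epsilon,t}$. Since those groups overlap, a patch at a round $t'>t$ can reintroduce conditional bias on $g_t$, so the fact that $\E_{(x,y)\sim\D}[f_{t+1}(x)-y\mid g_t(x)=1]$ is (up to rounding) zero immediately after round $t$ does not transfer to the final model $f$. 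You are right that this is the crux, and the paper's discussion glosses over it.

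The gap is that neither of your two proposed resolutions is carried out, and neither works as sketched. For route (i), the number of rounds can be as large as $32/(\alpha\epsilon^2)$ and each $|\Delta_{t'}|$ can be of order one, so the accumulated drift $\sum_{t'>t}|\Delta_{t'}|\,\mu(g_t\cap g_{t'})/\mu(g_t)$ is not absorbed by the factor-of-16 slack between the per-step improvement $\alpha\epsilon^2/16$ and the target $\alpha\epsilon^2$. For route (ii), part (3) of Theorem~\ref{thm:mainthm} bounds the residual \emph{model-versus-model} disagreement mass, which is a different quantity from the \emph{model-versus-label} conditional bias $\E_{(x,y)\sim\D}[f(x)-y\mid g(x)=1]$ on an earlier group $g=U^{>,<}_{\epsilon,t}$; a small terminal disagreement region says nothing about whether $f$ is still mean-consistent on a group of mass at least $\alpha/2$ that was patched many rounds earlier. (A smaller inaccuracy: the selected witness need not individually satisfy both $\mu(g_t)\ge\alpha/2$ and $|\Delta_t|>\epsilon/2$; only the product bound $\mu(g_t)\Delta_t^2\ge\alpha\epsilon^2/8$ is guaranteed.) The only statement your argument rigorously establishes is the per-round one --- that $f_{t+1}$ satisfies the MA inequality on $g_t$ right after its own patch --- which is strictly weaker than the lemma as stated. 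Closing it would require either an explicit interference bound or a final re-auditing pass over $\G$, and the proposal supplies neither.
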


\section{Extending Reconcile to the Causal Inference Setting}
\label{sec:prelims}

In order to further demonstrate the applicability and effectiveness of the reconciliation paradigm, we now extend the original Reconcile algorithm to the setting of causal inference, which we call \emph{ReconcileCATE}. 
Similar to the usual PM problem, we can obtain different causal average treatment effect (CATE) estimators $\hat{\tau}$ for the same dataset $D = \{(X_i, Y_i, T_i)\}$ that disagree on certain predictions $\hat{\tau}(x)$.
Thus, Reconcile algorithm offers a robust solution to both predictive multiplicity in machine learning and heterogeneous treatment effects in causal inference. 
We provide the first experiments concerning the application of Reconcile to the setting of Causal Inference.
We can solve the ReconcileCATE problem by using the Reconcile algorithm as a subroutine.
Formally:
\begin{lemma}\label{lemma:reconcile-cate-to-reconcile}
    Given two CATE predictors $\hat{\tau}_1, \hat{\tau}_2$ trained on the same dataset $D = \{(X_i, Y_i, T_i)\}$, and parameters $\alpha, \epsilon >0$ we can solve the ReconcileCATE problem efficiently with one call to the Reconcile algorithm.
\end{lemma}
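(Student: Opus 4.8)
The plan is to reduce ReconcileCATE to the ordinary prediction problem by replacing each observed label with a transformed outcome whose conditional mean equals the CATE. Concretely, writing $e(x) = \Pr(T=1 \mid X=x)$ for the propensity score (known in a randomized experiment, estimated otherwise), I would define the inverse-propensity-weighted pseudo-outcome
\[
    \tilde{Y}_i \triangleq \left( \frac{T_i}{e(X_i)} - \frac{1-T_i}{1-e(X_i)} \right) Y_i.
\]
The first step is to verify that $\tilde{Y}$ is an unbiased signal for the CATE. Invoking consistency ($Y_i = Y_i(T_i)$), unconfoundedness ($\{Y(1),Y(0)\}\perp T\mid X$), and overlap ($0<e(x)<1$), a short conditioning argument gives $\E[\tilde{Y}\mid X=x] = \E[Y(1)-Y(0)\mid X=x] = \tau(x)$. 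This is precisely the causal analogue of the identity $\E[y\mid x]=f^*(x)$ underlying the original Reconcile, with $\tau$ in the role of the ground truth $f^*$ and $\tilde{Y}$ in the role of the observed label $y$.

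The second step is to show that minimizing the pseudo-Brier score $\E[(\hat{\tau}(x)-\tilde{Y})^2]$ is equivalent, up to an additive constant, to minimizing the CATE Brier score $B(\hat{\tau},\D)$. Expanding the square and conditioning on $x$, the cross term $\E[(\hat{\tau}(x)-\tau(x))(\tau(x)-\tilde{Y})]$ vanishes because $\E[\tilde{Y}\mid x]=\tau(x)$, yielding
\[
    \E[(\hat{\tau}(x)-\tilde{Y})^2] = B(\hat{\tau},\D) + \E[(\tau(x)-\tilde{Y})^2],
\]
where the second term does not depend on $\hat{\tau}$. Hence any improvement in the pseudo-Brier score produced by a Reconcile patch is an identical improvement in the true CATE Brier score, and the group-conditional-mean-consistency test against the empirical average of $\tilde{Y}$ over a witness group $g$ tests exactly the consistency of $\hat{\tau}$ with $\tau$, since $\E[\tilde{Y}\mid g(x)=1]=\E[\tau(x)\mid g(x)=1]$ by the tower property.

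Given these two facts the reduction is immediate: run Reconcile on the pair $(\hat{\tau}_1,\hat{\tau}_2)$ with the original binary labels $y_i$ replaced by the pseudo-labels $\tilde{Y}_i$, leaving $\alpha,\epsilon$ unchanged. Every step of Reconcile---identifying the witness $U^{>}_{\epsilon}$ or $U^{<}_{\epsilon}$, patching the corresponding estimator, and bounding the per-round progress---carries over verbatim, so the output pair inherits the causal analogues of the three guarantees of Theorem~\ref{thm:mainthm}. Since this is a single invocation of Reconcile on $n$ relabeled samples, efficiency is preserved.

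The main obstacle I anticipate is the range mismatch: Theorem~\ref{thm:mainthm} is stated for models and labels valued in $[0,1]$, whereas $\tau(x)$ lies in a bounded symmetric interval and the pseudo-outcomes $\tilde{Y}$ can be substantially larger in magnitude (scaling like $1/e(x)$). Handling this cleanly requires an affine rescaling of both the estimators and the pseudo-labels into $[0,1]$ before calling Reconcile---using the boundedness of $Y(1),Y(0)$ together with the overlap bound to control the rescaling factor---and then tracking how that factor enters the constants in the termination and improvement bounds. When treatment is merely estimated rather than randomized, one must additionally argue that plugging $\hat{e}(x)$ in for $e(x)$ perturbs the unbiasedness of step one only by a controllable amount; in the randomized-experiment setting this issue disappears entirely, since $e(x)$ is known exactly.
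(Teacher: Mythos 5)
Your proposal is correct and follows the same high-level strategy as the paper --- reduce ReconcileCATE to a single call of Reconcile by relabeling the dataset with a pseudo-outcome whose group-conditional mean identifies the group-average CATE --- but the pseudo-label you choose is genuinely different. The paper relabels each point with the plug-in difference of conditional means, $\hat{y}_i = \E[y \mid X = x_i, T=1] - \E[y \mid X = x_i, T=0]$ (estimated in practice within causal-tree leaf subgroups), so the relabeled target is essentially a direct estimate of $\tau(x_i)$ itself; the group-conditional mean consistency test then reduces to comparing $\hat{\tau}$ against within-group difference-in-means, which is exactly how the standalone ReconcileCATE algorithm in the appendix is written. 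You instead use the inverse-propensity-weighted transformed outcome $\tilde{Y}_i$, and justify the reduction by verifying $\E[\tilde{Y}\mid X=x]=\tau(x)$ and the pseudo-Brier decomposition in which the cross term vanishes. Your route buys a per-individual pseudo-label that needs only the propensity score (exact in a randomized experiment) rather than estimated outcome regressions on subgroups, and your explicit verification that Brier-score improvements on the pseudo-labels equal improvements against $\tau$ is an argument the paper leaves implicit. The cost is the range and variance blow-up of $\tilde{Y}$ (scaling like $1/e(x)$), which you correctly flag as requiring an affine rescaling and a corresponding degradation of the constants in the termination and per-round-progress bounds; the paper's plug-in labels stay in $[-1,1]$ and largely avoid this (though the paper, too, silently ignores that CATE estimators live in $[-1,1]$ rather than the $[0,1]$ assumed by Theorem~\ref{thm:mainthm}). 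Neither issue undermines the reduction; your proof is sound modulo carrying the rescaling constant through, which the paper does not do either.
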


\begin{proof}
    Given the causal dataset $D = \{(X_i, Y_i, T_i)\}$, we transform it into a new dataset $\mathcal{D'} = \{(x_i, \hat{y}_i)\}_{i=1}^N$, where $\hat{y}_i$ for each $x_i \in \mathcal{X}$ is defined as $\hat{y}_i = \mathbb{E}[y | X = x_i, T = 1] - \mathbb{E}[y | X = x_i, T = 0]$.
    We call the Reconcile algorithm (Algorithm~\ref{alg:reconciler}) setting $f_1 := \hat{\tau}_1, f_2 = \hat{\tau}$ and dataset $\mathcal{D'}$. 
    By the guarantees of Theorem~\ref{thm:mainthm}, from the call Reconcile$(f_1, f_2)$ we obtain two predictors $f_1^{T_1}, f_2^{T_2}$ such that $\mu(U_{\epsilon}(f_1^{T_1}, f_2^{T_2})) < \alpha$. 
    Thus, the same guarantee holds for $\hat{\tau}_1, \hat{\tau}_2$.
 \end{proof}

Hence we obtain the same guarantees of quick convergence, Brier score improvements, and final disagreement between CATE estimators.
As established in the proof of Lemma~\ref{lemma:MAReconcile}, ReconcileCATE ensures that CATE predictors are multiaccurate (MA) with respect to the groups $\G = \{U^{>, <}_{\epsilon, t}\}_{t \in T}$ defined by the disagreement region $U_{\epsilon}(\hat{\tau}_1, \hat{\tau}_2)$. This region includes subsets of the covariate space where the initial models $\hat{\tau}_1$ and $\hat{\tau}_2$ differ significantly, i.e., where $|\hat{\tau}_1(x) - \hat{\tau}_2(x)| \geq \epsilon$. By enforcing multiaccuracy within $U_{\epsilon}$, ReconcileCATE adjusts CATE estimates to better align with the observed data in regions of high model disagreement.

Doubly-Robust (DR) estimation for CATE refers to the property that an estimator is consistent even when one of two models is misspecified \citep{kennedy2023towards}.
Specifically, the estimator is consistent if either the propensity score model $\hat{e}(x)$ or the outcome regression models $\hat{\mu}_1(x)$ and $\hat{\mu}_0(x)$ are correctly specified. The outcome regression models represent the expected outcomes for the treated and control groups, given covariates $x$: $\mu_1(x) = \mathbb{E}[Y \mid X = x, T = 1]$, $\mu_0(x) = \mathbb{E}[Y \mid X = x, T = 0]$.
These models describe the relationship between covariates and outcomes for treated and untreated populations, respectively.

Recent work by \cite{kern2024multi} shows how we can post-process CATE estimators with multiaccuracy to ensure doubly-robustness.
Thus, since by Lemma~\ref{lemma:MAReconcile} ReconcileCATE achieves MA with respect to a collection of groups within $U_{\epsilon}$, in our setting, we could hope to obtain doubly-robustness as a by-product of running ReconcileCATE (due to the MA guarantee of ReconcileCATE).
However, \citep{kern2024multi} require a large and rich class of auditors $\mathcal{F}$, whereas the collection $\G$ in the case of ReconcileCATE (in Lemma~\ref{lemma:MAReconcile}) falls short.
Still, future research may investigate ways of enriching the collection $\G$ so that we can obtain strong doubly-robustness guarantees from ReconcileCATE, or to further exploit the underlying multiaccuracy guarantees.

For example, by Lemma~\ref{lemma:MAReconcile}, which shows how Reconcile yields a multiaccuracy guarantee, we can apply the Reconcile algorithm to CATE predictors that have been post-processed with a multiaccuracy procedure as per Algorithm 1 in \citep{kern2024multi}, which ensures robustness under unknown covariate shifts, and the same guarantee will be true after the reconciliation procedure.
That is, given two $T$-learner estimates $\hat{\tau}_1, \hat{\tau}_2$, suppose we apply Algorithm 1 from \citep{kern2024multi} to $\hat{\tau}_1, \hat{\tau}_2$, obtaining $\tilde{\tau}_1, \tilde{\tau}_2$ that satisfy Proposition 2 in \citep{kern2024multi}.
Then, we can call the ReconcileCATE$(\tilde{\tau}_1, \tilde{\tau}_2)$ algorithm, obtaining predictors $(\tilde{\tau}_1^{T_1}, \tilde{\tau}_2^{T_2})$.
Given that by Lemma~\ref{lemma:MAReconcile}
ReconcileCATE only expands the multiaccuracy guarantee to a larger collection of groups, both predictors $\tilde{\tau}^{T_1}_1, \tilde{\tau}^{T_2}_2$ continue to satisfy the robustness guarantee under unknown covariate shifts shown in \citep[Proposition 2]{kern2024multi}.

\section{Evaluation of ReconcileCATE} \label{sec:CATE experiments}
\subsection{Building Pairs of Estimators to Reconcile}
For our experiments in this section, we use the Twins dataset \citep{almond2005costs, guo2020survey} and the National Study dataset \citep{nosek2015promoting}. Further details on the data sets can be found in Appendix \ref{appendix:datasets}.  For implementing causal estimators, we use the CausalML Python package \citep{chen2020causalml}.
Our approach here is similar to the one described in Section \ref{sec:building models} with some small changes.
First, we need to define the subgroups that we want to condition on for estimating the average effect. To do so, we train a causal tree first and use the leaf nodes as our subgroups. With this approach, we end up with $468$ and $173$ subgroups on Twins and National Study datasets.
We then move on to training causal estimators that have comparable accuracy but significant disagreement between their estimates for our groups (the same definition as in Section~\ref{sec:building models} but with $\epsilon = 0.04$ and $\alpha = 0.01$). For our causal experiments, we only use the first and second methods from Section \ref{sec:building models} (different estimators and different subsets of training data).
We use R, S, T, and X learners as well as a causal uplift tree for our estimators. Details on each of these estimators can be found in Appendix \ref{appendix:models}.
After training estimators with the desired criteria, we use the $468$ and $173$ CATE estimates as the data points for Reconcile. For each method and each estimator, we repeat the experiment 20 times.
\begin{figure}
    \centering
    \includegraphics[width=15cm]{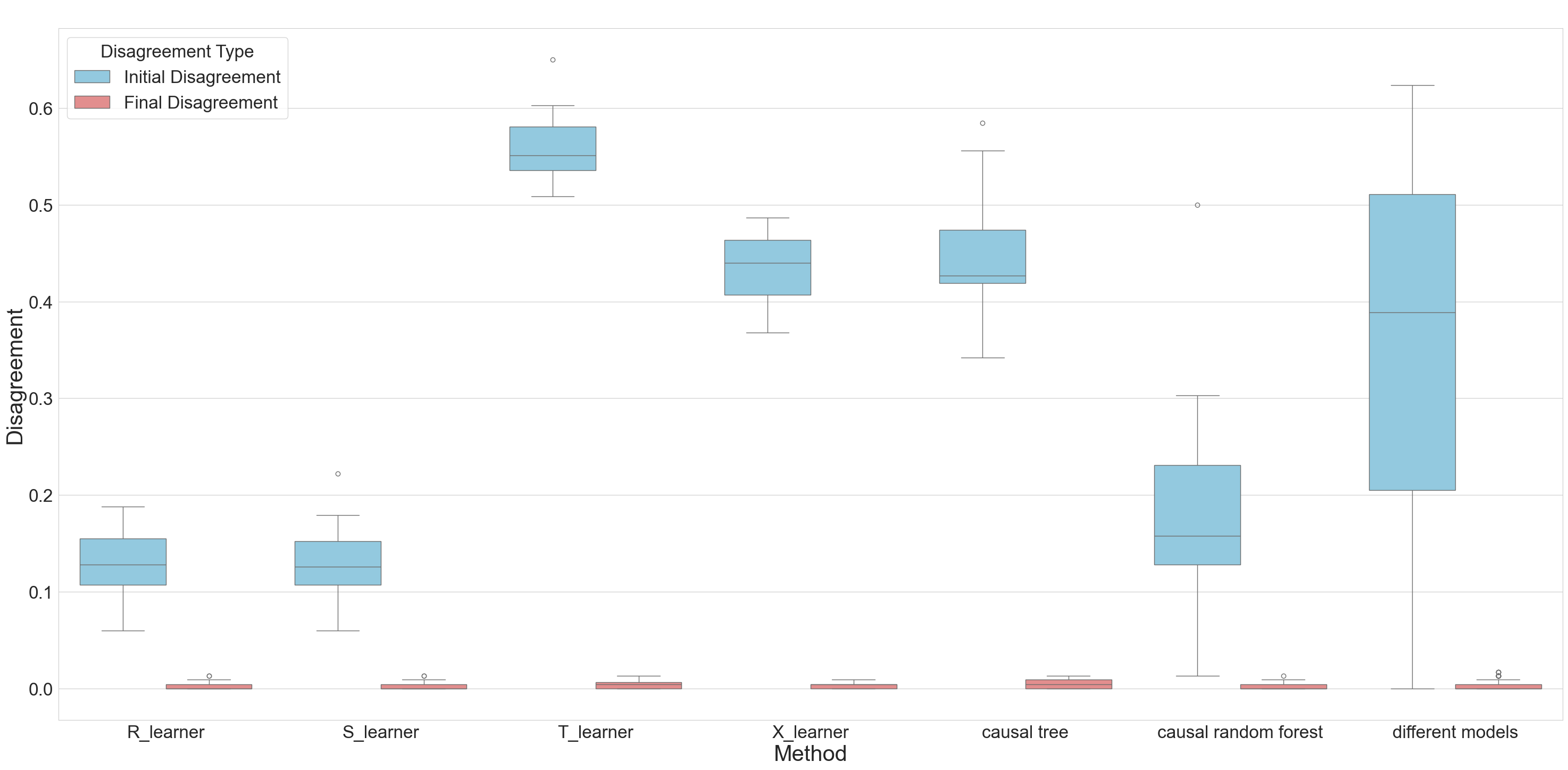}
    \caption{Disagreements between CATE estimates before and after running Reconcile on Twins dataset.}
    \label{fig:CATE_disagreement_twin}
\end{figure}

\subsection{Results}
Across all experiments, the algorithm converged within 1 to 18 rounds. As shown in Figure \ref{fig:CATE_disagreement_twin} Reconcile effectively reduces disagreement to zero in nearly all cases (the same is true for the National Study dataset and is illustrated in Figure \ref{fig:CATE_disagreement_national} in Appendix \ref{appendix:results}). In these plots, the \textit{different models} refer to the method in which we use two different estimators (among the 5 different estimators mentioned) to find estimators that have disagreement (second methods from Section \ref{sec:building models}). Additionally, when examining the Brier scores of the estimators before and after applying Reconcile, we observe consistent improvements (results can be found in Appendix \ref{appendix:results}, Figures \ref{fig:CATE_brier_twin} and \ref{fig:CATE_brier_national}).

\section*{Acknowledgments}
We thank Dylan Small and Ilya Shpitser for helpful discussions on causal connections. We are also grateful to Aaron Roth, Michael Kearns, Zhiwei Steven Wu, Christopher Jung, Jamelle Watson-Daniels, and Flavio P. Calmon for valuable feedback, comments, and discussions. 
We thank Christoph Kern and Angela Zhou for answering our questions.
Finally, we warmly thank Audrey Chang for her contributions to a related early-stage project and her efforts under guidance, which have informed some of the present work.

\bibliographystyle{ACM-Reference-Format}
\bibliography{refs}

\newpage

\appendix

\section{Full Reconcile Algorithm}\label{appendix-sec:reconcilealgo}

Recently, \cite{Roth2022-sd} made a significant step towards resolving the problem of predictive multiplicity: they showed that, while individual probabilities are inherently unknowable, they are \emph{falsifiable}. 
Namely, given two different models $f_1$, $f_2$ that are learned on the same data, we can perform an efficient reconciliation on $(f_1, f_2)$.
If $f_1, f_2$ already agree in their predictions, then we are done.
Otherwise, we can efficiently define a group (which identifies a subset of the individuals) that ``witnesses'' this disagreement.
The quantity we are interested in quantifying is the following:
\begin{definition}[$\alpha$-approx. group cond. mean consistency]\label{definition:meanconsistency}
    Given a model $f: \X \rightarrow [0,1]$, a group $g: \X \rightarrow \{0,1\}$, and a parameter $\alpha>0$, we say that $f$ satisfies $\alpha$-\emph{approximate group conditional mean consistency} with respect to $g$ if
    \[
        \Big( \E_{(x, y) \sim \D}[f(x) | g(x)=1] - \E_{(x, y) \sim \D[y |g(x)=1]} \Big)^2 \leq \dfrac{\alpha}{\mu(g)}.
    \]
\end{definition}
A group $g$ is a witness of disagreement if it witnesses a violation of $\alpha$-approximate group conditional mean consistency.

Then, we can use this group to update either $f_1$ or $f_2$, so that the updated model now makes correct predictions on average with respect to this fixed group. 
Suppose that $f'_1$ is the model that has been patched through this update. 
Next, their algorithm iteratively repeats the following step: if $f'_1$ and $f_2$ do not present significant disagreement between them, we terminate.
Otherwise, we find a group that witnesses this disagreement and use it to patch one of the two models.
We continue this procedure until the two models no longer significantly disagree; we call $(f_1^T, f_2^T)$ the final two models.\footnote{While the Reconcile algorithm technically returns two updated models, rather than one merged model, the two models agree on most of their predictions. In some cases, as we discuss later, it will be useful to view the output of Reconcile as a single model.} 

In \citep{Roth2022-sd}, the authors then show three key properties of their Reconcile algorithm: (1) The iterative process converges quickly.
(2) At every step $0 < t \leq T$ of the algorithm, the model $f^t_1$ is better than $f^{t-1}_1$, and likewise for $f_2$. 
By ``better'' we mean more accurate: namely, each step of the Reconcile process reduces the square loss of the model compared to the true labels that we obtain from the ground-truth model $f^*$.
(3) When the Reconcile algorithm terminates, the models $(f_1^T, f_2^T)$ do not present significant disagreement (which we quantify with a parameter $\alpha>0$).
We remark that the Reconcile algorithm is heavily inspired by the recent literature on multigroup fairness, particularly on the development of the notion of \emph{multicalibration} \citep{hebert2018multicalibration, gopalan2023characterizing}.
As we further discuss in the appendix, the Reconcile algorithm closely resembles the multiaccuracy algorithm: in both cases, we find a group that witnesses a violation of the desired property that we have defined, and then we use this group to patch the current predictor and make progress towards $f^*$ measured in square loss. 

In sum, the key insight from \citep{Roth2022-sd} is that, whenever the predictive multiplicity problem arises, we can resolve it through an efficient reconciliation process that builds a more accurate model. 
An important difference between the Reconcile approach and some of the model multiplicity literature is that some previous works (e.g., those that deal with Rashomon sets \citep{hsu2022rashomon})
restrict the models to lie within a fixed hypothesis class $\Ha$.
In contrast, the updated models produced during the Reconcile algorithm do not necessarily belong to the same model class as $f_1$ or $f_2$.
Therefore, whenever there is no need to remain within the hypothesis class $\Ha$, the Reconcile algorithm constitutes a provable solution, with good theoretical guarantees, to the predictive multiplicity problem.

We include the full pseudocode of the Reconcile algorithm taken from \cite{Roth2022-sd} for completeness.

\begin{algorithm}[H]
\caption{\textsc{Reconcile}($f_1,f_2,\alpha,\epsilon,\D$) \cite{Roth2022-sd}}
\label{alg:reconciler}
\begin{algorithmic}[1]  

\STATE \textbf{Initialize:} 
\[
  t \;=\; t_1 \;=\; t_2 \;=\; 0, 
  \quad
  f_1^{t_1} = f_1,
  \quad
  f_2^{t_2} = f_2.
\]

\STATE \(\displaystyle
  m = \left\lceil \frac{2}{\sqrt{\alpha}\,\epsilon} \right\rceil.
\)

\WHILE{\(\mu\bigl(U_\epsilon(f_1^{t_1},\,f_2^{t_2})\bigr) \;\ge\; \alpha\)}
    \STATE \textbf{For each} \(\bullet \in \{>,<\}\) \textbf{and} \(i \in \{1,2\}\), define:
    \[
      v_*^\bullet 
      \;=\; 
      \mathbb{E}_{(x,y)\sim \D}\!\bigl[y \mid x \in U_\epsilon^\bullet(f_1^{t_1},\,f_2^{t_2})\bigr],
      \quad
      v_i^\bullet 
      \;=\; 
      \mathbb{E}_{(x,y)\sim \D}\!\bigl[f_i^{t_i}(x) \mid x \in U_\epsilon^\bullet(f_1^{t_1},\,f_2^{t_2})\bigr].
    \]

    \STATE Let
    \[
      (i_t,\bullet_t) 
      \;=\;
      \underset{\substack{i \in \{1,2\}\\ \bullet \in \{>,<\}}}{\arg\max}
      \bigl[\,
        \mu\bigl(U_\epsilon^\bullet(f_1^{t_1},\,f_2^{t_2})\bigr)
        \cdot
        (\,v_*^\bullet - v_i^\bullet\,)^2
      \bigr]
    \]
    breaking ties arbitrarily.

    \STATE Define
    \[
      g_t(x) 
      \;=\; 
      \begin{cases}
      1, & \text{if } x \in U_\epsilon^{\bullet_t}(f_1^{t_1},\,f_2^{t_2}), \\[4pt]
      0, & \text{otherwise}.
      \end{cases}
    \]

    \STATE \(\displaystyle
      \tilde{\Delta}_t 
      = 
      \mathbb{E}_{(x,y)\sim \D}\!\bigl[y \mid g_t(x) = 1\bigr]
      \;-\;
      \mathbb{E}_{(x,y)\sim \D}\!\bigl[f_{i_t}^{\,t_{i_t}}(x) \mid g_t(x)=1\bigr],
    \quad
      \Delta_t 
      =
      \text{Round}(\tilde{\Delta}_t;\,m).
    \)

    \STATE \(\displaystyle
      \tilde{f}_i^{\,t_i+1}(x) 
      = 
      h\bigl(x,\,f_i^{t_i},\,g_t,\,\Delta_t\bigr),
      \quad
      f_i^{\,t_i+1}(x) 
      = 
      \text{Project}\bigl(\tilde{f}_i^{t_i+1},\,[0,1]\bigr),
    \quad
      t_i \;=\; t_i + 1,
      \quad
      t \;=\; t + 1.
    \)
\ENDWHILE

\STATE \textbf{Output:} \((f_1^{t_1},\,f_2^{t_2})\).

\end{algorithmic}
\end{algorithm}

\section{Extending Reconcile to CI}\label{sec:CI-extension}
\subsection{Application of the Reconcile Algorithm to CATE}

Here we provide a full, self-contained description of the ReconcileCATE algorithm, without reducing it directly to Reconcile.
While the algorithm and proof are an adaptation of the original algorithm to the domain of causal inference, and hence follows exactly the same structure,
we include it as to have a standalone proof and full description of the ReconcileCATE algorithm that does not rely on a black-box call to Reconcile, which we hope can be helpful to practitioners who are only interested in the causal setting.

All notation and structure follows exactly the original description of the Reconcile algorithm by \citep{Roth2022-sd} and we do not claim its originality.
We hope the same use of notation can ease the understanding of the two papers together.

\begin{definition}
    Two models $\hat{\tau}_1$ and $\hat{\tau}_2$ have an $\epsilon$-disagreement on a point $x \in X$ if $|\hat{\tau}_1(x) - \hat{\tau}_2(x)| > \epsilon$.
Let $U_\epsilon(\hat{\tau}_1, \hat{\tau}_2)$ be the set of points on which $\hat{\tau}_1$ and $\hat{\tau}_2$ $\epsilon$-disagree:
$$U_\epsilon(\hat{\tau}_1, \hat{\tau}_2) = \{x : |\hat{\tau}_1(x) - \hat{\tau}_2(x)| > \epsilon\}$$

\end{definition}

We can use \textit{subset groups} to falsify models by demonstrating that models significantly disagree on their average treatment effect for that group. If so, the model with the lower group accuracy must be incorrect. We can model groups $g$ as indicator functions $g: \mathcal{X} \rightarrow \{0,1\}$ where our distribution on inputs $x$ follows from a joint distribution $D$.

\begin{definition}
    Under a distribution $\mathcal{D}$, a group $g:X \rightarrow \{0,1\}$ has probability mass $\mu(g)$ defined as:
$$\mu(g) = \underset{(x,y,t) \sim \mathcal{D}}{\Pr}[g(x) =1].$$
\end{definition}

For a model $\hat{\tau}$ and a group $g$, we define \textit{approximate group conditional mean consistency} to capture the extent to which the average prediction of the model on points in $g$ compares to the expected outcome on points in $g$. Models $\hat{\tau}$ are falsified when they do not satisfy approximate group conditional mean consistency on any group $g$.

\begin{definition}
    For those in group $g$, $\hat{\tau}$ satisfies $\alpha$-approximate group conditional mean consistency with respect to a covariate $X$ if

\begin{align*}
    &\Bigg(\underset{(x, y, t) \sim \mathcal{D}}{\mathbb{E}}[\hat{\tau}(x)|g(x)=1] \\
    &- \left(\underset{(x, y, t) \sim \mathcal{D}}{\mathbb{E}}[y|g(x)=1, T=1] - \underset{(x, y, t) \sim \mathcal{D}}{\mathbb{E}}[y|g(x)=1, T=0] \right) \Bigg)^2 \leq \frac{\alpha}{\mu(g)} .   
\end{align*}

\end{definition}

This is analogous to comparing their squared error. We normalize with $\mu(g)$ to account for varying group size. 

\begin{definition}
    Fix any two models $\hat{\tau}_1, \hat{\tau}_2: \mathcal{X} \mapsto [0,1]$ and any $\epsilon > 0$. Define the sets: 

    \begin{align*}
        U_{\epsilon}^>(\hat{\tau}_1, \hat{\tau}_2) &= \{x \in U_{\epsilon}(\hat{\tau}_1, \hat{\tau}_2): \hat{\tau}_1(x)> \hat{\tau}_2\}\\
        U_{\epsilon}^<(\hat{\tau}_1, \hat{\tau}_2) &= \{x \in U_{\epsilon}(\hat{\tau}_1, \hat{\tau}_2): \hat{\tau}_1(x)< \hat{\tau}_2\}
    \end{align*}

    Based on these sets for $\bullet \in \{>, <\}$ and $i \in \{1,2\}$, 
    \begin{align*}
        v_*^\bullet &= \underset{(x,y,t \sim D)}{\E}[y|x\in U_{\epsilon}^\bullet(\hat{\tau}_1, \hat{\tau}_2), T=1] - \underset{(x,y,t \sim D)}{\E}[y|x\in U_{\epsilon}^\bullet(\hat{\tau}_1, \hat{\tau}_2), T=0],\\
        v_i^\bullet &= \underset{(x,y,t \sim D)}{\E} [\hat{\tau}_i(x)|x \in U_{\epsilon}^\bullet(\hat{\tau}_1, \hat{\tau}_2)].
    \end{align*}
\end{definition}

\begin{lemma}
    For models $\tau_1$ and $\tau_2$ with an $\epsilon$-disagreement mass $\mu(U_{\epsilon}(\hat{\tau}_1, \hat{\tau}_2)) = \alpha$, 

    \begin{align*}
        \mu(U_{\epsilon}^\bullet(\hat{\tau}_1, \hat{\tau}_2)) \cdot (v_*^\bullet-v_i^\bullet)^2 \geq \frac{\alpha \epsilon^2}{8}.
    \end{align*}
\end{lemma}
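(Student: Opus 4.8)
The plan is to reproduce the two-step counting argument underlying the analogous key lemma of \citep{Roth2022-sd} in the prediction setting, now with the causal quantities $v_*^\bullet$ and $v_i^\bullet$ in place of the empirical label mean and the model average. The statement should be read as asserting that the maximum over $\bullet \in \{>,<\}$ and $i \in \{1,2\}$ — exactly the quantity the argmax in Algorithm~\ref{alg:reconciler} selects — is at least $\alpha\epsilon^2/8$, so it suffices to exhibit one admissible pair $(\bullet, i)$ achieving the bound.

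First I would use pigeonhole on the mass. Since $U_\epsilon(\hat\tau_1,\hat\tau_2) = U_\epsilon^>(\hat\tau_1,\hat\tau_2) \sqcup U_\epsilon^<(\hat\tau_1,\hat\tau_2)$ is a disjoint decomposition of total mass $\mu(U_\epsilon) = \alpha$, at least one piece has mass at least $\alpha/2$; fix $\bullet$ to be that piece, so $\mu(U_\epsilon^\bullet) \geq \alpha/2$. Next I would pass the pointwise disagreement through the average: on $U_\epsilon^\bullet$ the gap $\hat\tau_1(x) - \hat\tau_2(x)$ has a single fixed sign and satisfies $|\hat\tau_1(x) - \hat\tau_2(x)| > \epsilon$ for every $x$, so taking the conditional expectation preserves this and gives $|v_1^\bullet - v_2^\bullet| > \epsilon$, where $v_i^\bullet = \E[\hat\tau_i(x) \mid x \in U_\epsilon^\bullet]$. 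Finally I would invoke the triangle inequality around $v_*^\bullet$:
\[
|v_1^\bullet - v_*^\bullet| + |v_2^\bullet - v_*^\bullet| \ge |v_1^\bullet - v_2^\bullet| > \epsilon,
\]
so some index $i \in \{1,2\}$ satisfies $|v_i^\bullet - v_*^\bullet| > \epsilon/2$, hence $(v_*^\bullet - v_i^\bullet)^2 > \epsilon^2/4$. Multiplying the two bounds for this pair $(\bullet, i)$ yields $\mu(U_\epsilon^\bullet) \cdot (v_*^\bullet - v_i^\bullet)^2 > (\alpha/2)(\epsilon^2/4) = \alpha\epsilon^2/8$, as required.

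The arithmetic is routine; the two points needing genuine care are structural. The first is sign-consistency on $U_\epsilon^\bullet$: splitting $U_\epsilon$ into $U_\epsilon^>$ and $U_\epsilon^<$ is precisely what guarantees the pointwise bound survives averaging, since on the full set $U_\epsilon$ cancellation could shrink $|v_1 - v_2|$ below $\epsilon$. The second, and the main thing to verify in adapting from the prediction setting, is that the causal target $v_*^\bullet = \E[y \mid U_\epsilon^\bullet, T{=}1] - \E[y \mid U_\epsilon^\bullet, T{=}0]$ slots into exactly the role played by the ``true conditional mean'' in \citep{Roth2022-sd}: the triangle-inequality step is agnostic to what $v_*^\bullet$ equals, treating it as an arbitrary fixed scalar, so no appeal to overlap or unconfoundedness is needed for this combinatorial bound itself — those assumptions only matter later, when one wants $v_*^\bullet$ to coincide with the genuine average CATE over the group. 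I do not anticipate a serious obstacle beyond keeping these two bookkeeping points straight.
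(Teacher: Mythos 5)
Your proposal is correct and follows essentially the same argument as the paper's proof: pigeonhole on the disjoint decomposition $U_\epsilon = U_\epsilon^> \sqcup U_\epsilon^<$ to get a piece of mass at least $\alpha/2$, sign-consistency on that piece so the pointwise gap survives conditional averaging, and the triangle inequality around $v_*^\bullet$ to extract an index $i$ with $|v_i^\bullet - v_*^\bullet| \geq \epsilon/2$. Your write-up is in fact more careful than the paper's terse version (in particular, making explicit that the lemma should be read as a bound on the quantity the argmax selects, and that the argument treats $v_*^\bullet$ as an arbitrary fixed scalar), but it is the same proof.
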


\begin{proof}
    We show that two inequalities hold. 
    First, we show that $\mu(U_{\epsilon}^\bullet)(\hat{\tau}_1, \hat{\tau}_2) \geq \frac{\alpha}{2}$. We know this is true because $U_{\epsilon}(\hat{\tau}_1, \hat{\tau}_2)$ can be written as the disjoint union 

    $$U_{\epsilon}(\hat{\tau}_1, \hat{\tau}_2) = U_{\epsilon}^>(\hat{\tau}_1, \hat{\tau}_2) \cup U_{\epsilon}^<(\hat{\tau}_1, \hat{\tau}_2).$$

    Next, we show that $|v_i^\bullet - v_*^\bullet| \geq \frac{\epsilon}{2}$ for at least one $\bullet \in \{>, <\}$. This follows from the assumption that $U_{\epsilon}^\bullet(\hat{\tau}_1, \hat{\tau}_2)$ are $\epsilon$-separated, so $|v_i^\bullet - v_*^\bullet| \geq \epsilon$ and one model's accuracy is lower-bounded by $\frac{\epsilon}{2}$. 

    Taken together, the two inequalities combine to yield 
    $$ \mu(U_{\epsilon}^\bullet(\hat{\tau}_1, \hat{\tau}_2)) \cdot (v_*^\bullet-v_i^\bullet)^2 \geq \frac{\alpha \epsilon^2}{8}.$$
\end{proof}

On some subset of the data for which $\tau_1$ and $\tau_2$ disagree in a congruent direction, their scaled error is at least $\frac{\alpha \epsilon^2}{8}$. 

\begin{lemma}
    Fix any model $\hat{\tau}_t: \mathcal{X} \mapsto [0,1]$, group $g_t: \mathcal{X} \mapsto \{0,1\}$, and distribution of $D$. 

    Let 
    \begin{align*}
        \Delta_t &=\left(\underset{(x, y, t) \sim \mathcal{D}}{\mathbb{E}}[y|g(x)=1, T=1] - \underset{(x, y, t) \sim \mathcal{D}}{\mathbb{E}}[y|g(x)=1, T=0] \right)\\
    & - \underset{(x, y, t) \sim \mathcal{D}}{\mathbb{E}}[\hat{\tau}(x)|g(x)=1]
    \end{align*}

    We define our new model $\hat{\tau}_{t+1} = h(x, \hat{\tau}_t;g_t, \Delta_t)$ where $h$ is a "patch" defined as

    \[
    h(x, \hat{\tau}_t;g_t, \Delta_t) = \begin{cases}
        \hat{\tau}(x) + \Delta & g(x)=1\\
        \hat{\tau}(x) & \text{otherwise}
    \end{cases}
    \]

    Then $$B(\hat{\tau}_t, D) - B(\hat{\tau}_{t+1}, D) = \mu(g_t)\cdot \Delta_t^2.$$
\end{lemma}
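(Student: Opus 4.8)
The plan is to follow the same potential-function argument used for the original Reconcile patch (Lemma 3.2 of \citep{Roth2022-sd}), now measuring progress with the CATE Brier score $B(\hat\tau, D) = \E_{(x,y,t)\sim D}[(\hat\tau(x)-\tau(x))^2]$. The key structural observation is that the patch $h$ only alters $\hat\tau_t$ on points with $g_t(x)=1$: on the complement $\hat\tau_{t+1}(x)=\hat\tau_t(x)$, so those points contribute nothing to the difference $B(\hat\tau_t,D)-B(\hat\tau_{t+1},D)$. Hence the entire change in Brier score is an expectation restricted to the group $g_t$, and the identity we must establish is purely about what happens inside that group.

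First I would write the Brier-score difference as a single expectation over $g_t(x)=1$ and expand the squares pointwise. Writing $a(x) = \hat\tau_t(x)-\tau(x)$, on the group we have $\hat\tau_{t+1}(x)-\tau(x)=a(x)+\Delta_t$, so $a(x)^2 - (a(x)+\Delta_t)^2 = -2\Delta_t\, a(x) - \Delta_t^2$. Taking expectations and pulling the constant $\Delta_t$ outside gives
\[
B(\hat\tau_t,D)-B(\hat\tau_{t+1},D) = -2\Delta_t\,\E[\mathbbm{1}[g_t(x)=1]\,(\hat\tau_t(x)-\tau(x))] - \Delta_t^2\,\mu(g_t).
\]
It then remains to evaluate the cross term $\E[\mathbbm{1}[g_t=1](\hat\tau_t-\tau)] = \mu(g_t)\cdot\bigl(\E[\hat\tau_t(x)\mid g_t=1] - \E[\tau(x)\mid g_t=1]\bigr)$.

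The crux, and the one genuinely causal step, is to identify $\E[\tau(x)\mid g_t(x)=1]$ with the observed quantity $\E[y\mid g_t=1,T=1]-\E[y\mid g_t=1,T=0]$ appearing in the definition of $\Delta_t$. This is where the standing assumptions enter: consistency and SUTVA give $Y=Y(T)$, and unconfoundedness together with overlap lets us write $\tau(x)=\mu_1(x)-\mu_0(x)$ with $\mu_t(x)=\E[Y\mid X=x,T=t]$; for the group-level average to coincide with the observed treated-minus-control difference one additionally needs $X\perp T\mid g_t(X)=1$ (immediate under a randomized design, and the reason the black-box reduction in Lemma~\ref{lemma:reconcile-cate-to-reconcile} can operate directly on the transformed labels $\hat y_i=\mu_1(x_i)-\mu_0(x_i)$). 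Granting this identification, the definition of $\Delta_t$ reads $\Delta_t = \E[\tau(x)\mid g_t=1]-\E[\hat\tau_t(x)\mid g_t=1]$, so $\E[\hat\tau_t-\tau\mid g_t=1] = -\Delta_t$ and the cross term equals $-\mu(g_t)\Delta_t$.

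Substituting back, the two $\Delta_t$-terms combine as $-2\Delta_t\cdot(-\mu(g_t)\Delta_t) - \mu(g_t)\Delta_t^2 = 2\mu(g_t)\Delta_t^2 - \mu(g_t)\Delta_t^2 = \mu(g_t)\Delta_t^2$, which is exactly the claimed equality. I expect the only real obstacle to be the identification step above: the remaining algebra is an exact (no-slack) version of the multiaccuracy potential computation, so the sole point requiring care is justifying that the observed group-conditional treated/control difference equals $\E[\tau(x)\mid g_t=1]$ under the paper's causal assumptions; everything else is a routine expansion of squares.
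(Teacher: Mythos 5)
Your proof is correct and follows essentially the same route as the paper's: both arguments observe that only the group $g_t$ contributes to the change in Brier score, expand the square under the constant shift $\Delta_t$, and obtain $2\mu(g_t)\Delta_t^2 - \mu(g_t)\Delta_t^2 = \mu(g_t)\Delta_t^2$ (the paper phrases the expansion in terms of group-conditional means while you work pointwise, but for a constant patch these coincide). The one substantive difference is that you explicitly justify identifying $\E[\tau(x)\mid g_t(x)=1]$ with the observed treated-minus-control difference (noting the extra balance condition $X\perp T\mid g_t(X)=1$ needed beyond unconfoundedness), a step the paper's proof uses silently; this is a legitimate point of care rather than a flaw in your argument.
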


Given a model $\tau_t$ and group $g_t$ that witness $\alpha$-approximate group conditional mean consistency on $\tau_t$, we can produce a $\tau_{t+1}$ with smaller  Brier score by exactly $\alpha$. 

\begin{proof}
    All expectations will be taken over $(x, y, t) \sim \mathcal{D}$. 

    \begin{align*}
        B(\hat{\tau}_t, D) &- B(\hat{\tau}_{t+1}, D) \\ &= \Pr[g_t(x)=0] \cdot \\
        &\big[\big(\E[\hat{\tau}_{t}(x)|g_{t}(x)=0] - \left(\E[y|g_{t}(x)=0, T=1] - \E[y|g_{t}(x)=0, T=0] \right) \big)^2 \\
    &- \big(\E[\hat{\tau}_{t+1}(x)|g_{t}(x)=0] - \left(\E[y|g_{t}(x)=0, T=1] - \E[y|g_{t}(x)=0, T=0] \right) \big)^2 \big]\\
    &+ \Pr[g_t(x)=1]\cdot \\
    &\big[\big(\E[\hat{\tau}_{t}(x)|g_{t}(x)=1] - \left(\E[y|g_{t}(x)=1, T=1] - \E[y|g_{t}(x)=1, T=0] \right) \big)^2 \\
    &- \big(\E[\hat{\tau}_{t+1}(x)|g_{t}(x)=1] - \left(\E[y|g_{t}(x)=1, T=1] - \E[y|g_{t}(x)=1, T=0] \right) \big)^2 \big].\\
    \end{align*}

    Our first term equals 0 because $\hat{\tau}_t$ and $\hat{\tau}_{t+1}$ only differ in predictions for $x$ such that $g(x)=1$, so the previous expression is equal to

    \begin{align*}
        &= 0 + \Pr[g_t(x)=1]\cdot \\
    &\big[\big(\E[\hat{\tau}_{t}(x)|g_{t}(x)=1] - \left(\E[y|g_{t}(x)=1, T=1] - \E[y|g_{t}(x)=1, T=0] \right) \big)^2 \\
    &- \big(\E[\hat{\tau}_{t+1}(x)|g_{t}(x)=1] - \left(\E[y|g_{t}(x)=1, T=1] - \E[y|g_{t}(x)=1, T=0] \right) \big)^2 \big]\\
    &= \mu(g)\left[ 2\Delta_t \left[ \E[y|g_t(x)=1, T=1] - \E[y|g_t(x)=1, T=0] - \E[\hat{\tau}_t|g_t(x)=1] - \Delta_t^2\right] \right]\\
    &= \mu(g) \left[2 \Delta_t^2 - \Delta_t^2\right] = \mu(g) \Delta_t^2.
    \end{align*}
\end{proof}

Therefore, when two models $\epsilon$-disagree on an $\alpha$-fraction of points, we can \textit{constructively falsify} at least one model and update its squared error by $O(\alpha \epsilon^2)$.

Next, we define a function \textit{Round}. 

\begin{definition}
    For any integer $m$, $[1/m]$ is the set of $m+1$ grid points: 

    $$\left[ \frac{1}{m}\right] = \left\{ 0, \frac{1}{m}, \cdots, \frac{m-1}{m}, 1 \right\}$$

    For any value $v \in [0,1]$, define $Round(v;m) = \underset{v' \in [1/m]}{\arg\min} |v-v'|$ return the closest grid point to $v$ in $[1/m]$.  
\end{definition}

Note that for $v'=Round(v;m)$ always returns a $v'$ such that $|v-v'| \leq \frac{1}{2m}$. 

\begin{algorithm}[H]
\caption{\textsc{ReconcileCATE}$\bigl(\hat{\tau}_1,\hat{\tau}_2,\alpha,\epsilon,\mathcal{D}\bigr)$) 
         \quad(\emph{A procedure to reconcile two CATE estimators})}
\label{alg:ReconcileCATE}
\begin{algorithmic}[1]

\STATE \textbf{Initialize:}
\[
  t \;=\; t_1 \;=\; t_2 \;=\; 0,
  \quad
  \hat{\tau}_1^{t_1} \;=\; \hat{\tau}_1,
  \quad
  \hat{\tau}_2^{t_2} \;=\; \hat{\tau}_2.
\]

\STATE \(\displaystyle
  m 
  \;=\;
  \left\lceil \frac{2}{\sqrt{\alpha}\,\epsilon} \right\rceil.
\)

\WHILE{\(\mu\bigl(U_\epsilon(\hat{\tau}_1^{t_1},\,\hat{\tau}_2^{t_2})\bigr)
        \;\ge\;\alpha\)}

    \STATE \textbf{For each} \(\bullet \in \{>,<\}\) \textbf{and} \(i \in \{1,2\}\), define:
    \[
      v_*^\bullet 
      \;=\; 
      \mathbb{E}_{(x,y,T)\sim \mathcal{D}}
        \Bigl[y \,\Big|\,
              x \in U_\epsilon^\bullet(\hat{\tau}_1^{t_1},\,\hat{\tau}_2^{t_2}),\; T=1\Bigr]
      \;-\;
      \mathbb{E}_{(x,y,T)\sim \mathcal{D}}
        \Bigl[y \,\Big|\,
              x \in U_\epsilon^\bullet(\hat{\tau}_1^{t_1},\,\hat{\tau}_2^{t_2}),\; T=0\Bigr],
    \]
    \[
      v_i^\bullet 
      \;=\; 
      \mathbb{E}_{(x,y,T)\sim \mathcal{D}}
        \Bigl[\hat{\tau}_i^{\,t_i}(x)
        \,\Big|\,
        x \in U_\epsilon^\bullet(\hat{\tau}_1^{t_1},\,\hat{\tau}_2^{t_2})\Bigr].
    \]

    \STATE Let
    \[
      (i_t,\,\bullet_t)
      \;=\;
      \underset{\substack{i \in \{1,2\}\\ \bullet \in \{>,<\}}}{\arg\max}
      \Bigl[\,
        \mu\bigl(U_\epsilon^\bullet(\hat{\tau}_1^{t_1},\,\hat{\tau}_2^{t_2})\bigr)
        \;\cdot\;
        \bigl(v_*^\bullet \;-\; v_i^\bullet\bigr)^2
      \Bigr]
      \quad
      \text{(breaking ties arbitrarily)}.
    \]

    \STATE Define
    \[
      g_t(x) 
      \;=\;
      \begin{cases}
        1, & \text{if } x \in U_\epsilon^{\bullet_t}(\hat{\tau}_1^{t_1},\,\hat{\tau}_2^{t_2}), \\[3pt]
        0, & \text{otherwise}.
      \end{cases}
    \]

    \STATE \(\displaystyle
      \tilde{\Delta}_t
      \;=\;
      \Bigl(
         \mathbb{E}_{(x,y,T)\sim \mathcal{D}}
           \bigl[y \mid g_t(x)=1,\;T=1\bigr]
       \;-\;
         \mathbb{E}_{(x,y,T)\sim \mathcal{D}}
           \bigl[y \mid g_t(x)=1,\;T=0\bigr]
      \Bigr)
      \;-\;
      \mathbb{E}_{(x,y,T)\sim \mathcal{D}}
        \bigl[\hat{\tau}_{i_t}^{\,t_{i_t}}(x)\mid g_t(x)=1\bigr],
    \)
    \[
      \Delta_t 
      \;=\;
      \text{Round}\bigl(\tilde{\Delta}_t,\;m\bigr).
    \]

    \STATE \(\displaystyle
      \hat{\tau}_{i_t}^{\,t_{i_t}+1}(x)
      \;=\;
      h\bigl(x,\;\hat{\tau}_{i_t}^{\,t_{i_t}},\;g_t,\;\Delta_t\bigr),
      \quad
      t_{i_t}
      \;=\;
      t_{i_t} \;+\; 1,
      \quad
      t
      \;=\;
      t \;+\; 1.
    \)

\ENDWHILE

\STATE \textbf{Output:} 
\(
  \bigl(\hat{\tau}_1^{t_1},\;\hat{\tau}_2^{t_2}\bigr).
\)

\end{algorithmic}
\end{algorithm}

\begin{theorem}
    Given any pair of models $\hat{\tau}_1, \hat{\tau}_2: \mathcal{X} \mapsto [0,1]$, any distribution $\mathcal{D}$, and any $\alpha, \epsilon \geq 0$, Algorithm 1 executes $T=T_1+T_2$ many rounds and returns a pair of models $(\hat{\tau}_1^{T_1},\hat{\tau}_1^{T_2})$ such that the following inequalities hold: 

    \begin{enumerate}
        \item $T \leq (B(\hat{\tau}_1, \mathcal{D})+B(\hat{\tau}_2, \mathcal{D})) \cdot \frac{16}{\alpha \epsilon^2}$
        \item $B(\hat{\tau}_1^{T_1}, \mathcal{D}) \geq B(\hat{\tau}_1, D) - T_1 \cdot \frac{\alpha \epsilon^2}{16} \text{ and } B(\hat{\tau}_2^{T_2}, \mathcal{D}) \geq B(\hat{\tau}_2, D) - T_2 \cdot \frac{\alpha \epsilon^2}{16}$
        \item $\mu(U_\epsilon (\hat{\tau}_1^{T_1}, \hat{\tau}_2^{T_2}) \leq \alpha$
    \end{enumerate}
\end{theorem}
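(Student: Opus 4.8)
The plan is to run the same potential-function argument that establishes Theorem~\ref{thm:mainthm}, taking the total Brier score $B(\hat{\tau}_1^{t_1}, \mathcal{D}) + B(\hat{\tau}_2^{t_2}, \mathcal{D})$ as the potential. The two lemmas preceding the theorem supply the engine: the first guarantees that whenever the while-loop guard $\mu(U_\epsilon(\hat{\tau}_1^{t_1}, \hat{\tau}_2^{t_2})) \ge \alpha$ holds, the $\arg\max$-selected pair $(i_t, \bullet_t)$ satisfies $\mu(g_t)\cdot(v_*^{\bullet_t} - v_{i_t}^{\bullet_t})^2 \ge \tfrac{\alpha\epsilon^2}{8}$, and the second tracks exactly how patching by an additive constant on $g_t$ changes the Brier score. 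So each iteration is a constructive step of provable progress, and all three properties fall out once the per-round progress is pinned down.

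First I would isolate the per-round progress claim: each update decreases the Brier score of the patched model by at least $\tfrac{\alpha\epsilon^2}{16}$. Writing $\tilde{\Delta}_t = v_*^{\bullet_t} - v_{i_t}^{\bullet_t}$ for the exact group-conditional gap and $\Delta_t = \mathrm{Round}(\tilde{\Delta}_t; m)$ for the value actually added on $g_t$, a direct expansion (as in the second preceding lemma, but carried out for the rounded patch $c = \Delta_t$ rather than the exact gap) gives a decrease of $\mu(g_t)\bigl(2\Delta_t\tilde{\Delta}_t - \Delta_t^2\bigr) = \mu(g_t)\bigl(\tilde{\Delta}_t^2 - (\tilde{\Delta}_t - \Delta_t)^2\bigr)$. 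Using $|\tilde{\Delta}_t - \Delta_t| \le \tfrac{1}{2m}$ together with the choice $m = \lceil \tfrac{2}{\sqrt{\alpha}\,\epsilon}\rceil$, which forces $\tfrac{1}{m^2} \le \tfrac{\alpha\epsilon^2}{4}$, the rounding loss is at most $\mu(g_t)\tfrac{1}{4m^2} \le \tfrac{\alpha\epsilon^2}{16}$, while the first lemma lower-bounds $\mu(g_t)\tilde{\Delta}_t^2$ by $\tfrac{\alpha\epsilon^2}{8}$; subtracting leaves a net decrease of at least $\tfrac{\alpha\epsilon^2}{16}$. I expect this rounding bookkeeping to be the main obstacle: it is the only place the factor $16$ (rather than $8$) appears, and getting it right requires the specific grid resolution $m$ and the bound $\mu(g_t) \le 1$.

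With the per-round progress in hand, properties (1) and (2) are immediate. Since the Brier score is nonnegative and the potential starts at $B(\hat{\tau}_1, \mathcal{D}) + B(\hat{\tau}_2, \mathcal{D})$ and drops by at least $\tfrac{\alpha\epsilon^2}{16}$ each round, the number of rounds obeys $T \le (B(\hat{\tau}_1, \mathcal{D}) + B(\hat{\tau}_2, \mathcal{D}))\cdot\tfrac{16}{\alpha\epsilon^2}$, which is (1). Attributing each drop to whichever model was patched and telescoping over the $T_i$ updates to model $i$ yields $B(\hat{\tau}_i^{T_i}, \mathcal{D}) \le B(\hat{\tau}_i, \mathcal{D}) - T_i\cdot\tfrac{\alpha\epsilon^2}{16}$, which is (2) (stated as an improvement, so the inequality in the displayed theorem should be read as this decrease).

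Finally, property (3) follows from the loop's exit condition. Because the bound in (1) is finite, the while loop must terminate; at termination its guard $\mu(U_\epsilon) \ge \alpha$ fails, so the returned pair satisfies $\mu(U_\epsilon(\hat{\tau}_1^{T_1}, \hat{\tau}_2^{T_2})) < \alpha$, giving (3). The one subtlety worth checking is that the witness lemma was stated for $\mu(U_\epsilon) = \alpha$, whereas the guard only gives $\mu(U_\epsilon) \ge \alpha$; I would note that its proof still goes through, since at least one of $U_\epsilon^{>}, U_\epsilon^{<}$ carries mass $\ge \mu(U_\epsilon)/2 \ge \alpha/2$ and the $\epsilon$-separation bound $|v_*^\bullet - v_i^\bullet| \ge \epsilon/2$ is independent of the total mass.
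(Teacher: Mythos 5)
Your proposal is correct and follows essentially the same route as the paper's proof: a potential argument on the total Brier score, with the per-round gain of $\tfrac{\alpha\epsilon^2}{16}$ obtained by combining the witness lemma's lower bound $\mu(g_t)\tilde{\Delta}_t^2 \ge \tfrac{\alpha\epsilon^2}{8}$ with the rounding loss bound $\mu(g_t)(\tilde{\Delta}_t-\Delta_t)^2 \le \tfrac{1}{4m^2} \le \tfrac{\alpha\epsilon^2}{16}$, then telescoping for (2), invoking nonnegativity of the Brier score for (1), and reading (3) off the loop's exit condition. Your direct algebraic expansion of the rounded patch is just a rephrasing of the paper's detour through the hypothetical unrounded update, and your observations about the guard giving $\mu(U_\epsilon)\ge\alpha$ rather than $=\alpha$ and about the inequality direction in (2) being a statement typo are both accurate.
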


The first inequality indicates that the initial performance of each model on the dataset limits the total number of rounds. This implies that the reconciliation procedure converges quickly. The second inequality specifies that the improvement in accuracy for $\hat{\tau}_1$ from its initial state to its final state $\hat{\tau}_1^{T_1}$ is at least $T_1 \cdot \frac{\alpha \epsilon^2}{16}$. This means that each model can improve its accuracy by a maximum of $\frac{\alpha \epsilon^2}{16}$ in each iteration of the algorithm where it is updated. A similar inequality holds for $\hat{\tau}_2$, indicating that both output models are more accurate than their initial versions. The third inequality asserts that the final $\epsilon$-disagreement mass between the two models is no more than $\alpha$, suggesting that the final models $(\hat{\tau}_1^{T_1}, \hat{\tau}_2^{T_2})$ almost always agree on their predictions of individual probabilities. Additionally, every intermediate model $\hat{\tau}_1^{t_1}$ and $\hat{\tau}_2^{t_2}$ that was considered but not selected as the final output was rejected because it did not meet the required level of $\alpha$-approximate group conditional mean consistency for the specified $\alpha$.

\begin{proof}
    From Lemma 1, we have that 
    $$\mu(U_{\epsilon}^{\bullet_t}(\hat{\tau}_1, \hat{\tau}_2)) \cdot (v_*^\bullet-v_i^\bullet)^2 \geq \frac{\alpha \epsilon^2}{8}.$$

    Now, let $\tilde{\hat{\tau}}_t^{t_i+1}=h(x,\hat{\tau}_i^{t_i}, g_t, \tilde{\Delta}_t)$. Recall that $\tilde{\Delta}_t$ is the rounded version of $\Delta_t$, so $\tilde{\hat{\tau}}_t^{t_i+1}$ is the update to $\hat{\tau}_t^{t_i}$ that would have resulted from using the unrounded measurement $\tilde{\Delta}_t$.

    Applying Lemma 2, we conclude $$B(\hat{\tau}_t^{t_i}, \mathcal{D})-B(\tilde{\hat{\tau}}_t^{t_i+1}, \mathcal{D}) \geq \frac{\alpha \epsilon^2}{8}.$$

We can now compute that:

\begin{align*}
(B(\hat{\tau}_t^{t_i}, \mathcal{D}) - B(\hat{\tau}_t^{t_i+1}, \mathcal{D})) &= (B(\hat{\tau}_t^{t_i}, \mathcal{D}) - B(\tilde{\hat{\tau}}_t^{t_i+1}, \mathcal{D}))-(B(\hat{\tau}^{t_i+1}, \mathcal{D}) - B(\tilde{\hat{\tau}}_t^{t_i+1}, \mathcal{D})) \\
&\geq \frac{\alpha \epsilon^2}{8} - (B(\hat{\tau}_t^{t_i+1}, \mathcal{D}) - B(\tilde{\hat{\tau}}_t^{t_i+1}, \mathcal{D})). \\
\end{align*}
Now, we can upper bound the final term, $B(\tilde{\hat{\tau}}_t^{t_i+1}, \mathcal{D})$. Let $\hat{\Delta} = \tilde{\Delta}_t-\Delta_t$, the difference between the rounded and unrounded values. Observe that:
\begin{enumerate}
    \item $\tilde{\hat{\tau}}_t^{t_i+1} = h(x, \hat{\tau}_t^{t_i+1}, g_t, \hat{\Delta})$
    \item \begin{align*}
    \hat{\Delta} &= \left(\underset{(x, y, t) \sim \mathcal{D}}{\mathbb{E}}[y|g_t(x)=1, T=1] - \underset{(x, y, t) \sim \mathcal{D}}{\mathbb{E}}[y|g_t(x)=1, T=0] \right)- \underset{(x, y, t) \sim \mathcal{D}}{\mathbb{E}}[\hat{\tau}_{i}^{t_i}(x)|g_t(x)=1] - \Delta_t\\
    &= \left(\underset{(x, y, t) \sim \mathcal{D}}{\mathbb{E}}[y|g_t(x)=1, T=1] - \underset{(x, y, t) \sim \mathcal{D}}{\mathbb{E}}[y|g_t(x)=1, T=0] \right)- \underset{(x, y, t) \sim \mathcal{D}}{\mathbb{E}}[\hat{\tau}_{i}^{t_i+1}(x)|g_t(x)=1].
        \end{align*}
    \item By definition of the Round operation, \(|\hat{\Delta}| \leq \frac{1}{2m}\).
\end{enumerate}

We apply Lemma 2 to conclude:

\begin{align*}
B(\hat{\tau}_t^{t_i+1}, \mathcal{D}) - B(\tilde{\hat{\tau}}_t^{t_i+1}, \mathcal{D}) = \mu(g_t) \hat{\Delta}^2 \leq \frac{1}{4m^2}.
\end{align*}

Recall that we set $m \geq \frac{2}{\sqrt{\alpha}\epsilon}$. Combine this with our initial calculation to conclude:
$$B(\hat{\tau}_t^{t_i}, \mathcal{D}) - B(\hat{\tau}_t^{t_i+1}, \mathcal{D}) \geq \frac{\alpha \epsilon^2}{8}-\frac{1}{4m^2} \geq \frac{\alpha \epsilon^2}{16}$$

Applying this lemma separately for $T_1$ and $T_2$ updates for $\hat{\tau}_1$ and $\hat{\tau}_2$ models, respectively, yields
$$B(\hat{\tau}_1^{T_1}, \mathcal{D}) \geq B(\hat{\tau}_1, D) - T_1 \cdot \frac{\alpha \epsilon^2}{16} \text{ and } B(\hat{\tau}_2^{T_2}, \mathcal{D}) \geq B(\hat{\tau}_2, D) - T_2 \cdot \frac{\alpha \epsilon^2}{16}.$$

Given that Brier scores are non-negative, we know that both $B(\hat{\tau}_1^{T_1},\mathcal{D}) \geq 0$ and $ B(\hat{\tau}_2^{T_2},\mathcal{D}) \geq 0$, so the above inequalities can be arranged to yield
$$T_1 \leq B(\hat{\tau}_1, \mathcal{D})\cdot \frac{16}{\alpha \epsilon^2} \text{ and } T_2 \leq B(\hat{\tau}_2, \mathcal{D})\cdot \frac{16}{\alpha \epsilon^2}.$$

Taken together, $T = T_1+T_2 \leq (B(\hat{\tau}_1, \mathcal{D})+B(\hat{\tau}_2, \mathcal{D})) \cdot \frac{16}{\alpha \epsilon^2}$

Finally, the halting condition of the algorithm implies that $\mu(U_\epsilon(\hat{\tau}_1^{T_1}, \hat{\tau}_2^{T_2})) \leq \alpha$. 
\end{proof}

Thus, given any two models with substantial disagreement, we can make strictly improved models. This means that we can never have two equally accurate models with substantial disagreements that cannot be improved—because their coexistence is used as a falsification for the minimally achievable loss—in our case, Brier score. 

So, for any pair of models that have been reconciled via the above algorithm, they must produce similar individual probability predictions and agree on almost all inputs. For sufficiently large reference classes, both models are consistent with the data within a certain bound and cannot, by design, significantly disagree.

\section{Reconcile Experiments} \label{appendix: experiments}
All the experiments discussed in Section \ref{sec:experiments} and \ref{sec:CATE experiments} are visualized in Figure \ref{fig:appendix experiments}. These experiments were conducted locally using a system equipped with an M2 chip, featuring an 8-core CPU and a 10-core integrated GPU, along with 16 GB of unified memory.
The full results of all our experiments for Sections \ref{sec:experiments Reconcile results}, \ref{sec:experiments Reconcile and MM} and \ref{sec:experiments solution} are available in the code repository.
\begin{figure*}[!ht]
\centering
\includegraphics[width=\columnwidth]{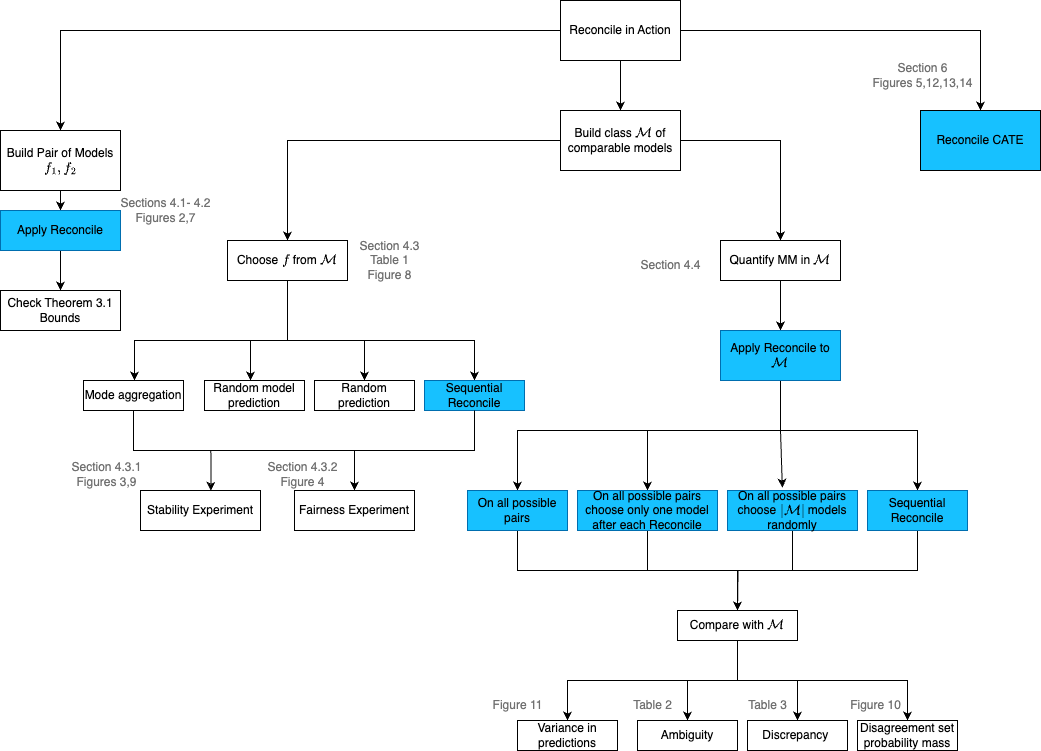}
\caption{Visual representation of goals and experiments in Section \ref{sec:experiments} and \ref{sec:CATE experiments}. All experiments that include applying Reconcile are colored blue. The corresponding sections are written outside the boxes.} \label{fig:appendix experiments}
\end{figure*}

\subsection{Descriptions of Datasets} \label{appendix:datasets}
Below are more details on the 5 datasets used in all our experiments in Section \ref{sec:experiments}.
\begin{enumerate}
    \item \textbf{The UCI Adult dataset}: The Adult dataset \citep{misc_adult_2} consists of 48842 individuals with demographic (e.g., age, race, and gender), education (degree), employment (occupation, hours-per-week), personal (marital status, relationship), and financial (capital gain/loss) features. The task is to predict whether an individual’s income exceeds \$50K per year or not (this corresponds to a binary classification task).
    \item \textbf{The COMPAS dataset}: COMPAS (Correctional Offender Management Profiling for Alternative Sanctions) \citep{Larson2016a} includes $1,0000$ individuals representing defendants
    released on bail. The task is to predict whether to release a defendant on bail or not (this corresponds to a binary classification task) using features, such as criminal history, jail, prison time, and defendant’s demographics. 
    \item  \textbf{The Communities and Crime dataset}: This dataset from UCI \citep{misc_communities_and_crime_183}, which contains $1{,}994$ communities in the United States described by 140 features relevant to crime
(e.g., police per population, income) as well as demographic features (such as race and sex). 
The goal is to predict the violent crime rate in the community (this corresponds to a continuous regression task).
    \item \textbf{Statlog German Credit Dataset}: The German Credit Data \citep{statlog_(german_credit_data)_144} consists of 1,000 borrower records, categorized into two classes: accepted and rejected applicants. Each instance is described by 20 input attributes, which include demographic factors like age, sex, and marital status, along with financial factors such as properties, job, and credit history. For our experiment, we utilize the version of the dataset provided by Strathclyde University, which contains all numeric values.
    \item \textbf{The American Community Survey (ACS)
    Public Use Microdata Sample (PUMS)}: The ACS PUMS dataset \citep{ding2021retiring} covers multiple years and all states across the United States. It supports five different prediction tasks: income, residential address change (mobility), commuting time to workplace, public health insurance, and employment. For our experiments, we focus on the first three tasks, limiting the data to the year 2018 for the state of Florida, yielding $98{,}925$, $34{,}491$, and $88{,}071$ instances for each respective task. The dataset comprises 14 features, including age, race, sex, class of work, and education level. The prediction tasks are whether an individual’s income is above $\$50,000$, whether an individual had the same residential address one year ago, and whether an individual has a commute to work that is longer than 20 minutes for the three prediction tasks that we have chosen.
\end{enumerate}

For our causal inference experiments in Section \ref{sec:CATE experiments}, we use the \textbf{Twins dataset} \citep{almond2005costs, guo2020survey} and National Study dataset \citep{nosek2015promoting}. The Twins dataset includes twins who are the same sex and weigh less than 2 kg. The treatment $T=1$ is being born the heavier twin and the outcome is the mortality of each of the twins in their first year of life. Twins represent counterfactuals which are usually produced synthetically in other benchmarks. 
After pre-processing the data, we end up with $23,968$ instances.

The National Study of Learning Mindsets is a randomized study conducted in U.S. public high schools, aimed at evaluating the impact of a nudge-like intervention designed to foster a growth mindset among students on their academic achievement. We utilize data from experiments described in \cite{athey2019estimating}. This dataset includes information on $10,391$ students from 76 schools, featuring a binary treatment indicator ($Z$), a real-valued outcome $Y$, and 10 covariates that are either categorical or real-valued, such as a student race, gender, and school achievement level.

\subsection{Models} \label{appendix:models}
Below are the details of the models used in the experiments in Sections \ref{sec:experiments Reconcile and MM} and \ref{sec:experiments solution} for building the class $\M$. For the fairness experiment outlined in Section \ref{sec:fairness} we start training the models from the following set. The first model is fixed as Random Forest 1. The second model is selected from the set of following models based on their performance in the cross-validation phase. Specifically, the second model is the first one from the set whose mean accuracy score (computed via cross-validation) falls within 0.05 of the mean accuracy score of Random Forest 1.

\textbf{Classification tasks}:
\begin{itemize}
    \item \textbf{Random Forest 1}: A Random Forest classifier with 100 trees.
    \item \textbf{Random Forest 2}: A Random Forest classifier with 200 trees and a maximum depth of 10.
    \item \textbf{Gradient Boosting 1}: A Gradient Boosting classifier with 100 boosting stages.
    \item \textbf{Gradient Boosting 2}: A Gradient Boosting classifier with 200 boosting stages and a learning rate of 0.05.
    \item \textbf{KNN 1}: A K-Nearest Neighbors classifier using the default number of neighbors.
    \item \textbf{KNN 2}: A K-Nearest Neighbors classifier using 3 neighbors.
    \item \textbf{Decision Tree 1}: A Decision Tree classifier with no limit on maximum depth.
    \item \textbf{Decision Tree 2}: A Decision Tree classifier with a maximum depth of 10.
    \item \textbf{Logistic Regression}: A Logistic Regression classifier.
    \item \textbf{Ridge Classifier}: A Ridge Classifier with default settings.
    \item \textbf{AdaBoost}: An AdaBoost classifier with default settings.
    \item \textbf{GaussianNB}: A Gaussian Naive Bayes classifier with default settings.
\end{itemize}

All classifiers were imported from the scikit-learn package. For all classifiers that accepted a random state argument(all models except KNN and GaussianNB have this parameter), we set the argument to 42 for reproducibility.

The classifiers above are the ones that satisfied the meta-rule from Section \ref{sec:experiments Reconcile and MM} and therefore were added to the set $\M$.

\paragraph{Regression tasks:}
\begin{itemize}
    \item \textbf{Random Forest 1}: A Random Forest regression with 100 trees.
    \item \textbf{Random Forest 2}: A Random Forest regression with 200 trees and a maximum depth of 10.
    \item \textbf{Gradient Boosting 1}: A Gradient Boosting regression with 100 boosting stages.
    \item \textbf{Gradient Boosting 2}: A Gradient Boosting regression with 200 boosting stages and a learning rate of 0.05.
    \item \textbf{KNN 1}: A K-Nearest Neighbors regression using the default number of neighbors.
    \item \textbf{KNN 2}: A K-Nearest Neighbors regression using 3 neighbors.
    \item \textbf{Decision Tree 1}: A Decision Tree regression with no limit on maximum depth.
    \item \textbf{Decision Tree 2}: A Decision Tree regression with a maximum depth of 10.
    \item \textbf{Linear Regression}: A Linear Regression with default parameters.
    \item \textbf{AdaBoost}: An AdaBoost regression with default settings.
\end{itemize}

All models were imported from the scikit-learn package. For all regressors that accepted a random state argument(all models except KNN have this parameter), we set the argument to 42 for reproducibility.
The models above are the ones that satisfied the meta-rule from Section \ref{sec:experiments Reconcile and MM} and therefore were added to the set $\M$.

\subsubsection{Causal Estimators}
Below is a short description for each of the meta-learners we used in our experiments all of which were imported from CausalML Python
package \citep{chen2020causalml}. For more information, refer to \citep{kunzel2019metalearners,nie2021quasi}.

\textbf{R-learner.} R-learner uses the cross-validation out-of-fold estimates of outcomes \(\hat{m}^{(-i)}(x_i)\) and propensity scores \(\hat{e}^{(-i)}(x_i)\). It consists of two stages as follows:

\textbf{Stage 1}

Fit \(\hat{m}(x)\) and \(\hat{e}(x)\) with machine learning models using cross-validation.

\textbf{Stage 2}

Estimate treatment effects by minimizing the R-loss, \(\hat{L}_n(\tau(x))\):

\[
\hat{L}_n(\tau(x)) = \frac{1}{n} \sum_{i=1}^{n} \left( \left( Y_i - \hat{m}^{(-i)}(X_i) \right) - \left( W_i - \hat{e}^{(-i)}(X_i) \right) \tau(X_i) \right)^2
\]

where \(\hat{e}^{(-i)}(X_i)\), etc., denote the out-of-fold held-out predictions made without using the \(i\)-th training sample.

\textbf{T-learner.}
T-learner consists of two stages as follows:

\textbf{Stage 1}

Estimate the average outcomes \(\mu_0(x)\) and \(\mu_1(x)\):

\[
\mu_0(x) = \mathbb{E}[Y(0) \mid X = x]
\]
\[
\mu_1(x) = \mathbb{E}[Y(1) \mid X = x]
\]

using machine learning models.

\textbf{Stage 2}

Define the CATE estimate as:

\[
\hat{\tau}(x) = \hat{\mu}_1(x) - \hat{\mu}_0(x)
\]

\textbf{S-learner.}
S-learner estimates the treatment effect using a single machine learning model as follows:

\textbf{Stage 1}
Estimate the average outcomes \(\mu(x)\) with covariates $X$ and an indicator variable for treatment \(T\):
\[
\mu(x, w) = \mathbb{E}[Y \mid X = x, T = t]
\]

using a machine learning model.

\textbf{Stage 2}
Define the CATE estimate as:
\[
\hat{\tau}(x) = \hat{\mu}(x, T = 1) - \hat{\mu}(x, T = 0)
\]

\textbf{X-learner.} X-learner is an extension of the T-learner and consists of three stages as follows:

\textbf{Stage 1}

Estimate the average outcomes \(\mu_0(x)\) and \(\mu_1(x)\):
\[
\mu_0(x) = \mathbb{E}[Y(0) \mid X = x]
\]
\[
\mu_1(x) = \mathbb{E}[Y(1) \mid X = x]
\]

using machine learning models.

\textbf{Stage 2}

Impute the user-level treatment effects, \(D^1_i\) and \(D^0_j\), for user \(i\) in the treatment group based on \(\mu_0(x)\), and for user \(j\) in the control group based on \(\mu_1(x)\):

\[
D^1_i = Y^1_i - \hat{\mu}_0(X^1_i)
\]
\[
D^0_i = \hat{\mu}_1(X^0_i) - Y^0_i
\]

Then estimate \(\tau_1(x) = \mathbb{E}[D^1 \mid X = x]\) and \(\tau_0(x) = \mathbb{E}[D^0 \mid X = x]\) using machine learning models.

\textbf{Stage 3}

Define the CATE estimate by a weighted average of \(\tau_1(x)\) and \(\tau_0(x)\):

\[
\tau(x) = g(x) \tau_0(x) + (1 - g(x)) \tau_1(x)
\]

where \( g \in [0,1] \). We can use propensity scores for \( g(x) \).

\subsection{Results}\label{appendix:results}
In this section, we go through more results and analysis from our experiments.
\begin{figure}
\centering
\includegraphics[width=1\columnwidth]{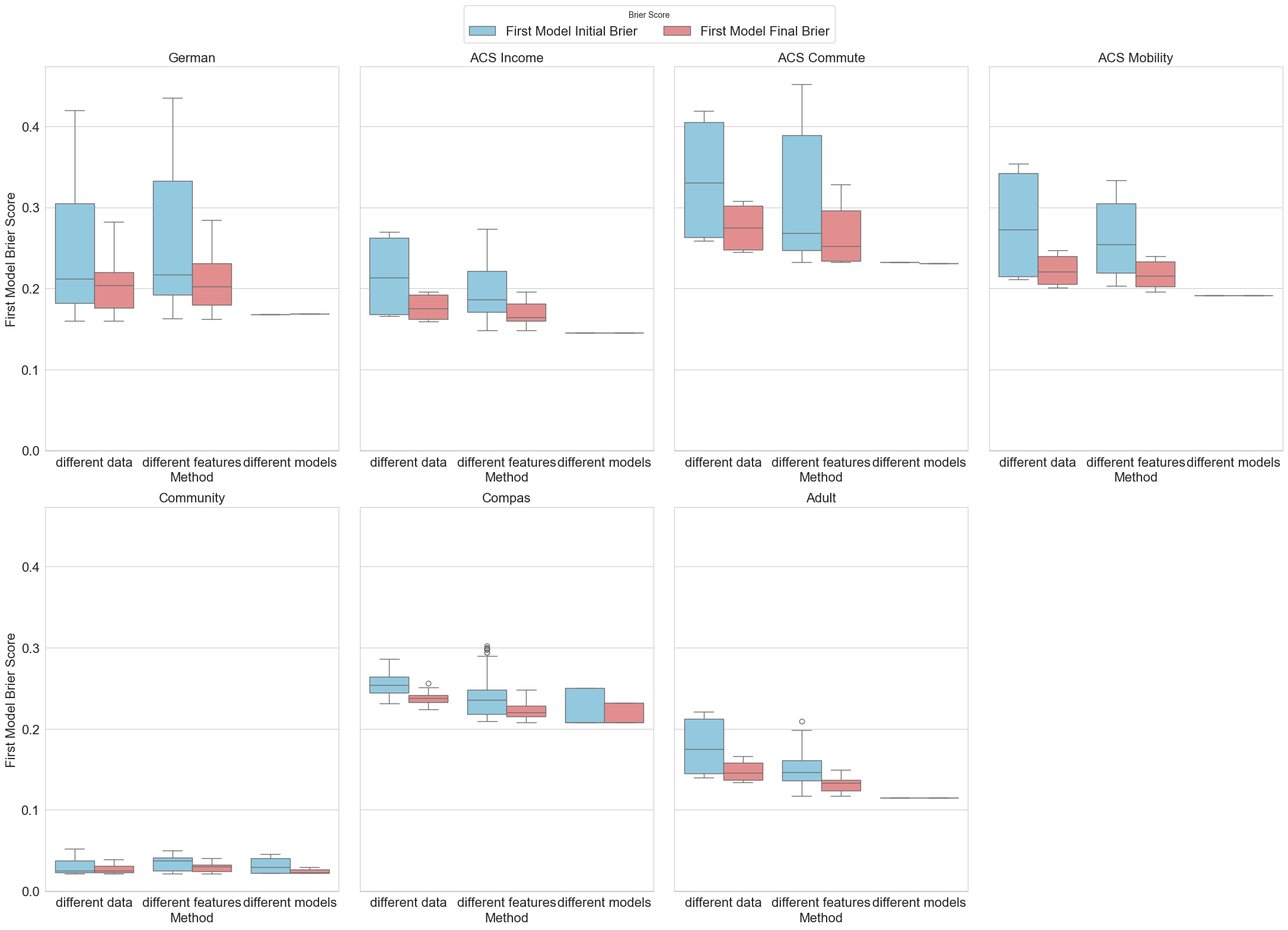} 
\caption{$f_1$ Brier Score for predictions across different datasets before and after Reconcile. Model construction follows methodologies described in Section~\ref{sec:building models}. Model specifics can be found in Section~\ref{sec:experiments Reconcile results}. $f_2$ scores follow a similar pattern and are omitted for clarity of the plot.}
\label{fig:brier_boxplot}
\end{figure}
\begin{figure}[h]
\centering
\includegraphics[width=1\columnwidth]{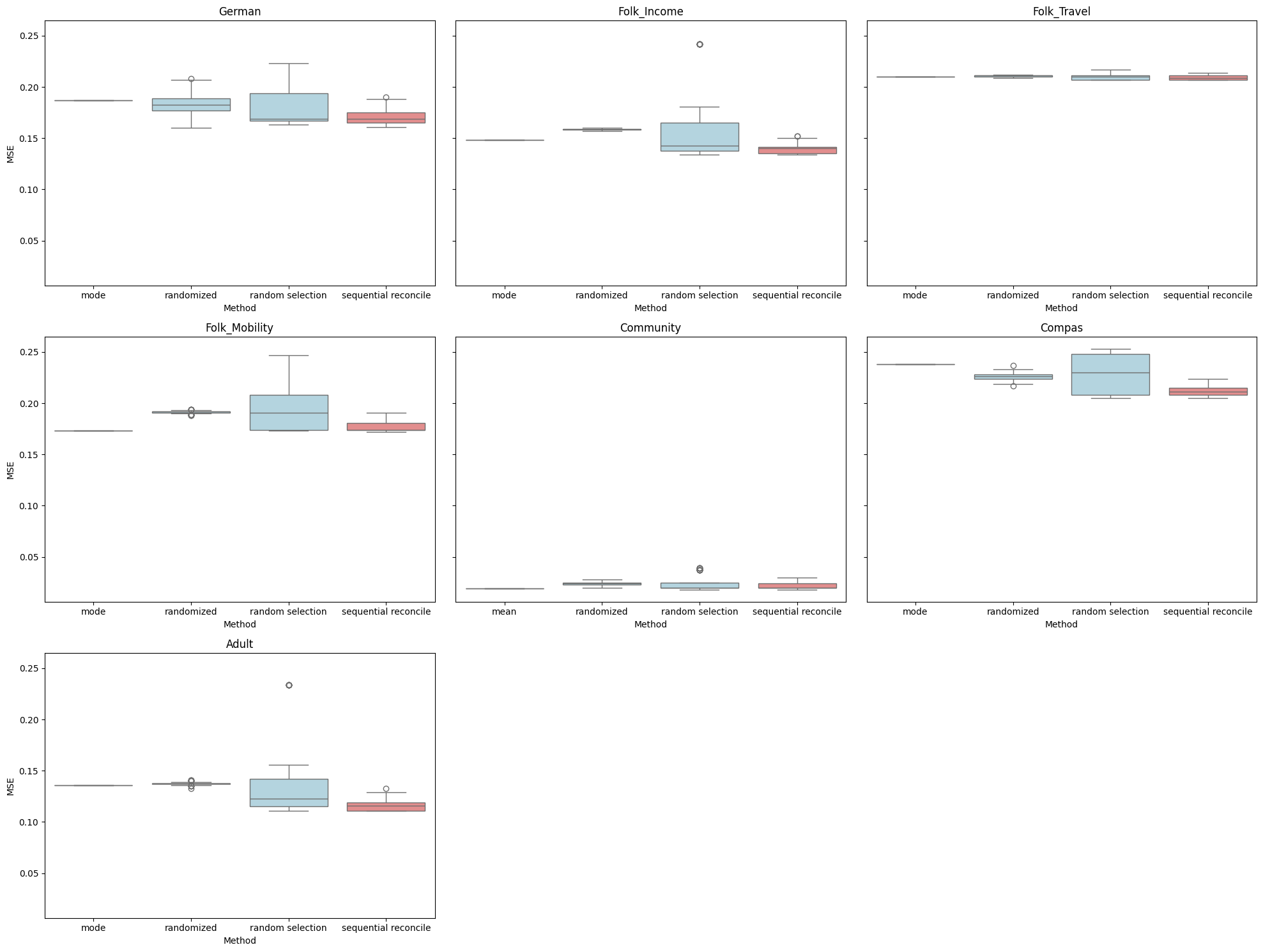} 
\caption{Boxplots comparing MSE values across different aggregation methods for each dataset (refer to Section \ref{sec:experiments Reconcile and MM} for experiment details). The Sequential Reconcile method (in red) generally shows lower variability and improved performance compared to other methods.}
\label{fig:aggregate_boxplot}
\end{figure}
\subsubsection{Robustness of Sequential Reconcile}
The experiment outlined in Section \ref{sec:experiment:stability} was repeated 20 times. For each run, we computed the differences in the MSE values of the aggregated predictors, illustrated in Figure \ref{fig:stability_mean}. We analyze the difference in MSE rather than individual values across experiments because the initial models in the set in each experiment may vary in performance. Directly comparing absolute values across experiments would not be meaningful, as differences in initial model quality could introduce variability unrelated to the effect being studied.

\begin{figure}[h]
\centering
\includegraphics[width=1\columnwidth]{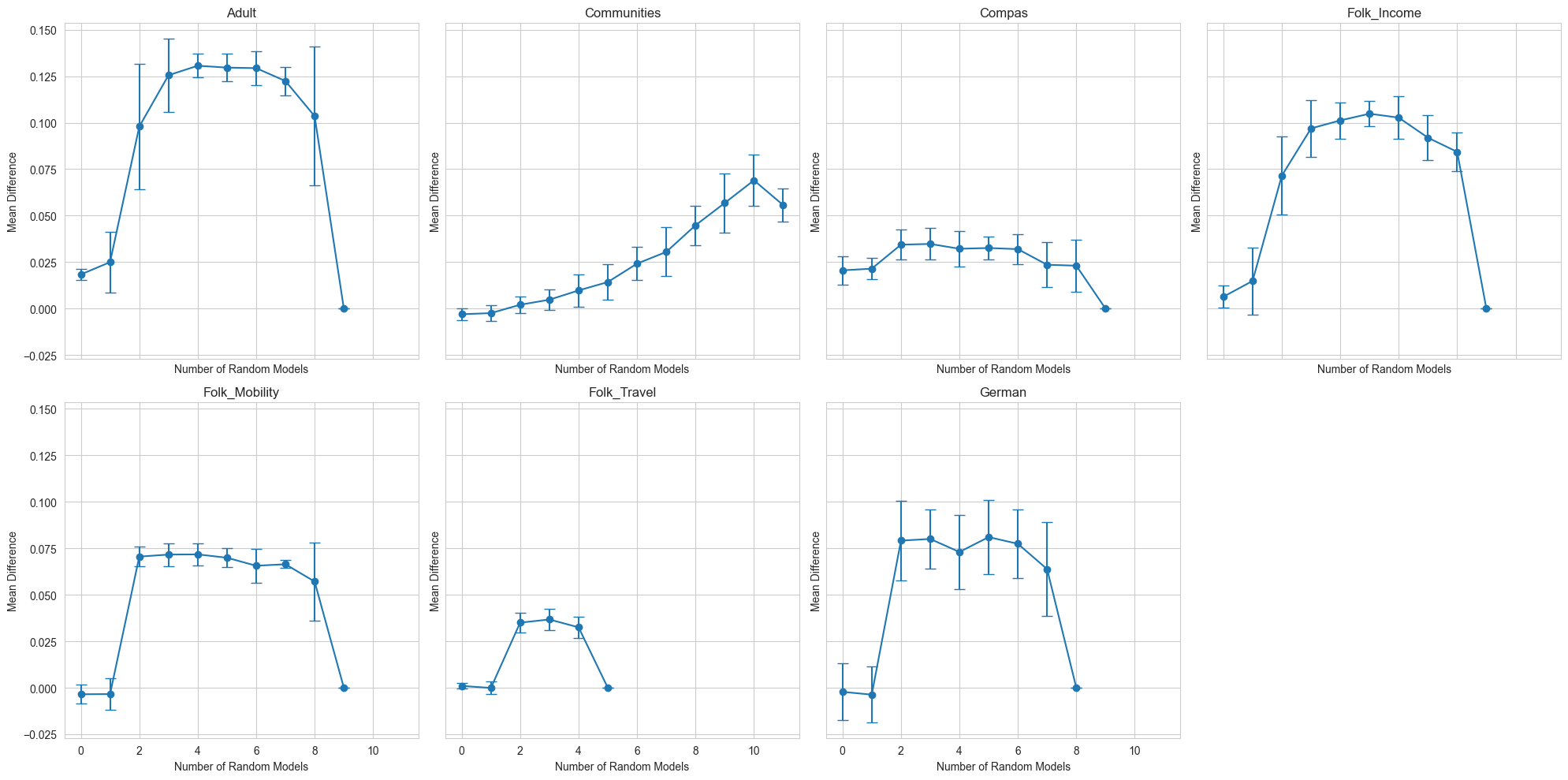} 
\caption{Mean differences in MSE between Mode aggregation (or Mean aggregation for regression in the case of the Communities dataset) and Sequential Reconcile across datasets as the number of random models increases ($MSE_{mode} - MSE_{reconcile}$). Error bars indicate variability over 20 repetitions.}
\label{fig:stability_mean}
\end{figure}

\subsubsection{Using Reconcile to Reduce Disagreements Within a Class $\M$} \label{appendix:results:within-class}
In Section \ref{sec:experiments solution}, four metrics were introduced to quantify model multiplicity within the set $\M$, though results were only discussed for one of these metrics; ambiguity. We now present the impact of applying Reconcile using the four methods described in Section \ref{sec:experiments solution} on the other three metrics.
Table \ref{tab:exp 3 Tukey discrepancy} shows the results of Tukey's HSD test on discrepancy within the sets created with each method. All that was discussed for ambiguity, holds true for discrepancy. We can see significant differences between $\M$ and the other four sets created by applying Reconcile. 
\begin{table}[h]
\centering
\begin{tabular}{|c|c|c|c|c|c|}
\hline
Set 1 & Set 2 & Mean Diff & P-adj & Upper & Lower \\ \hline
$\M$   &$\M'_a$ & $0.081$ & $0.0$ & $0.088$ & $0.073$ \\ \hline
$\M$   &$\M'_b$ & $0.106$ & $0.0$ & $0.114$ & $0.098$ \\ \hline
$\M$   &$\M'_c$ & $0.120$ & $0.0$ & $0.128$ & $0.112$ \\ \hline
$\M$   &$\M'_d$ & $0.149$ & $0.0$ & $0.156$ & $0.141$ \\ \hline
$\M'_a$   &$\M'_b$ & $0.025$ & $0.0$ & $0.033$ & $0.017$ \\ \hline
$\M'_a$   &$\M'_c$ & $0.040$ & $0.0$ & $0.048$ & $0.032$ \\ \hline
$\M'_a$   &$\M'_d$ & $0.068$ & $0.0$ & $0.076$ & $0.060$ \\ \hline
$\M'_b$   &$\M'_c$ & $0.015$ & $0.0$ & $0.022$ & $0.007$ \\ \hline
$\M'_b$   &$\M'_d$ & $0.043$ & $0.0$ & $0.050$ & $0.035$ \\ \hline
$\M'_c$   &$\M'_d$ & $0.029$ & $0.0$ & $0.036$ & $0.020$ \\ \hline
\end{tabular}
\caption{Multiple Comparison of Mean values for discrepancy across all studies for different sets - Tukey HSD, FWER=0.05}
\label{tab:exp 3 Tukey discrepancy}
\end{table}

Figures \ref{fig:disagreement} and \ref{fig:variance} visualize the statistical summaries of disagreement values and variance in predictions across sets created with different methods. 
Examining Figure \ref{fig:disagreement}, we observe that the minimum disagreement probability mass between model pairs is zero for all methods except (c) and (d). Since we selected models that perform similarly well (as defined by the meta-rule in Section~\ref{sec:experiments Reconcile and MM}), it is highly plausible that at least one pair of models will not have significant disagreement. For methods (a) and (b), where all possible model pairs are considered for Reconcile, there is a high probability that at least one model remains unchanged throughout the process. Because the same predictor appears multiple times in the set, the disagreement between identical models is always zero, ensuring that at least some models in the set exhibit no disagreement. However, method (c) selects $|\M|$ models at random, and in method (d), the order of model selection is randomized. Our hypothesis is that this random selection process increases the likelihood of scenarios where the minimum disagreement between models is nonzero, explaining the observed differences in minimum disagreement distributions.The same hypothesis applies to the maximum variance in predictions (see Figure~\ref{fig:variance}).

Aside from these two anomalies, the patterns observed in both disagreement values (Figure \ref{fig:disagreement}) and variance in predictions (Figure \ref{fig:variance}) align with the results of the ambiguity and discrepancy metrics. Specifically, we observe that applying Reconcile using any of the methods (a), (b), (c), or (d) consistently reduces the range of mean and variance values for these metrics.

\begin{figure}[h]
\centering
\includegraphics[width=15cm]{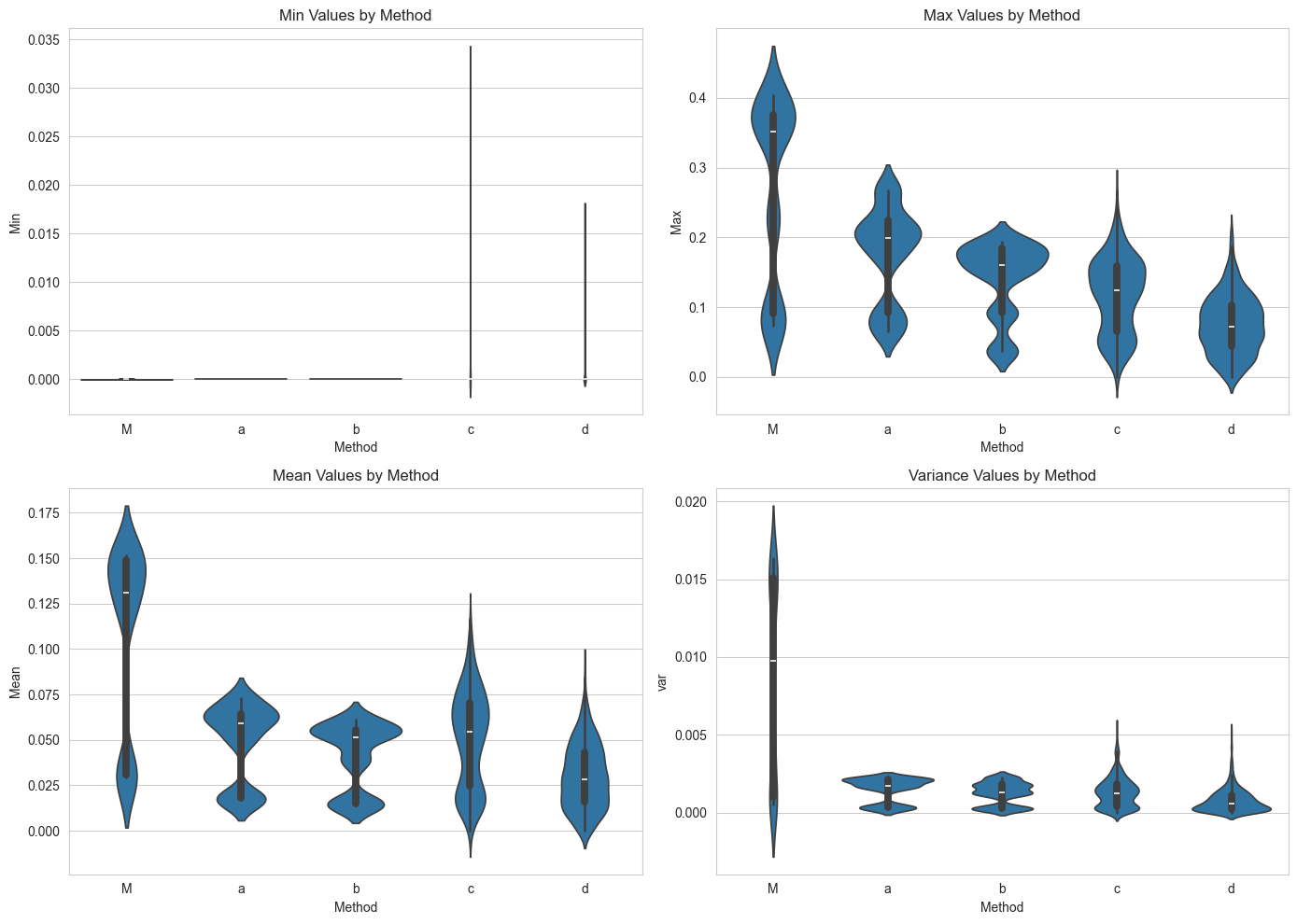}
\caption{For all 2-combinations of models \( f_1 \) and \( f_2 \) in $\M$, \( \mu(U_{0.2}(f_1, f_2))) \)) was calculated. The four plots show the distribution of the statistics(min, max, mean and variance) for different runs.} \label{fig:disagreement}
\end{figure}

\begin{figure}[h]
\centering
\includegraphics[width=16cm]{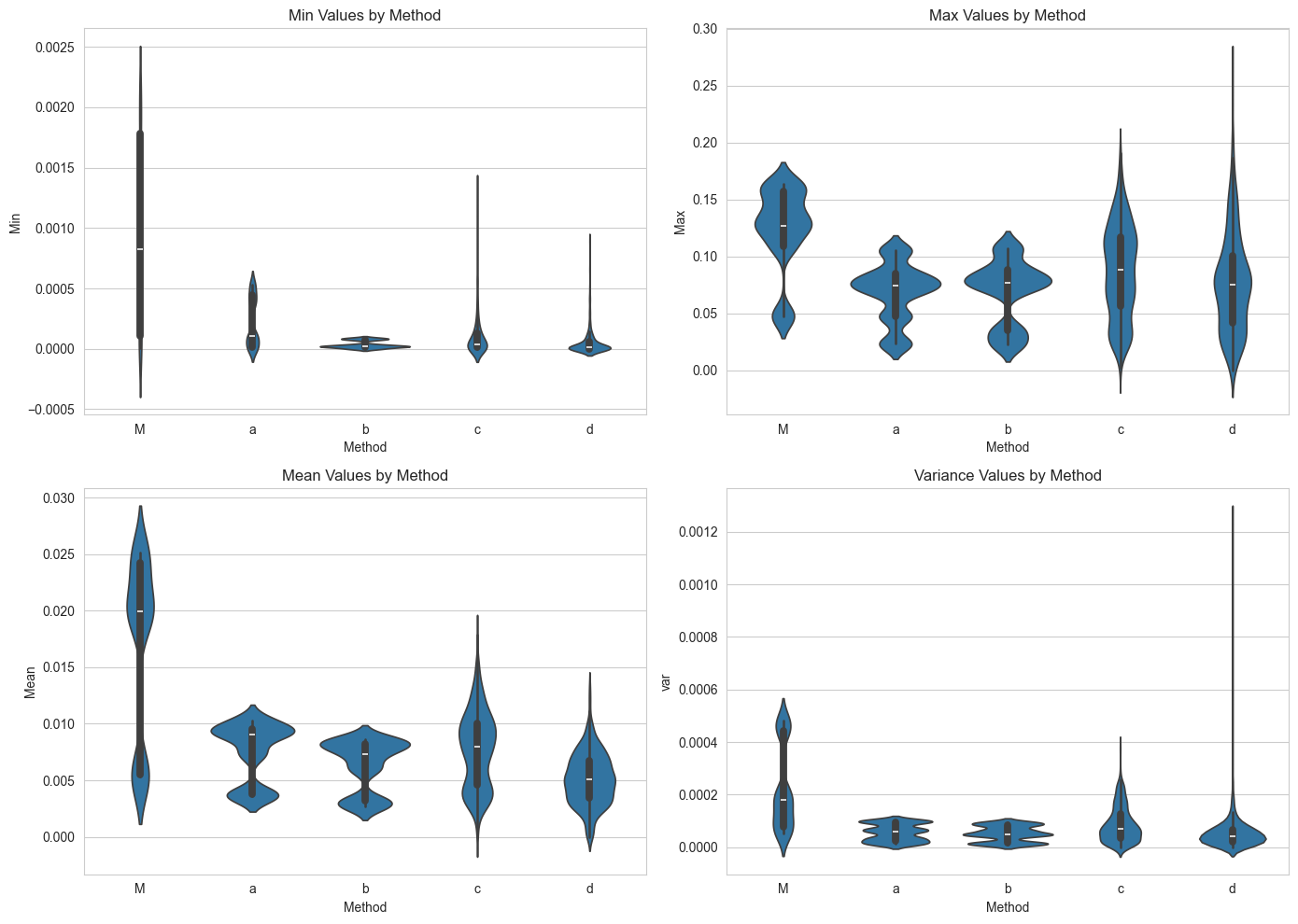}
\caption{For each instance $x_i$ in $D$, we calculate the variance(\(\sigma^2(x_i)\)) of the predictions made by all models in the set $\M$. The four plots show the distribution of the statistics(min, max, mean, and variance) of these variances across the entire set $D$ for different runs.} \label{fig:variance}
\end{figure}

\subsubsection{CATE Experiments}
Figure \ref{fig:CATE_disagreement_national} shows disagreement levels on the national study dataset before and after Reconcile. Figures \ref{fig:CATE_brier_twin} and \ref{fig:CATE_brier_national} present the Brier score of the first estimator before and after applying Reconcile on the Twins dataset and the National Study dataset. The second estimator exhibited a similar trend and was omitted for clarity. In all cases, we observe an improvement in the Brier score, even if marginal.

\begin{figure}
    \centering
    \includegraphics[width=14cm]{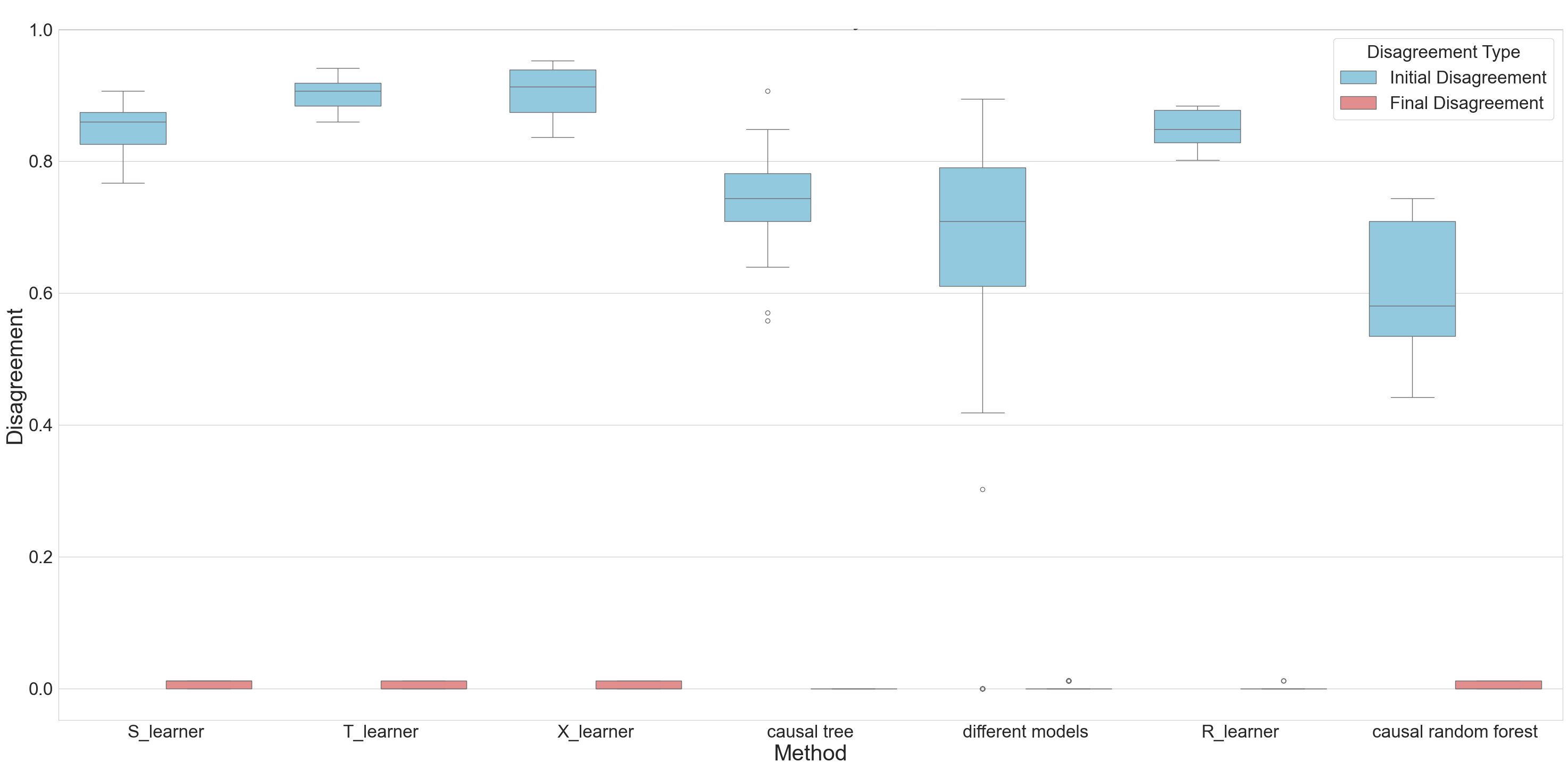}
    \caption{Disagreements between CATE estimates before and after running Reconcile on National Study dataset.}
    \label{fig:CATE_disagreement_national}
\end{figure}

\begin{figure}
    \centering
    \includegraphics[width=0.9\textwidth]{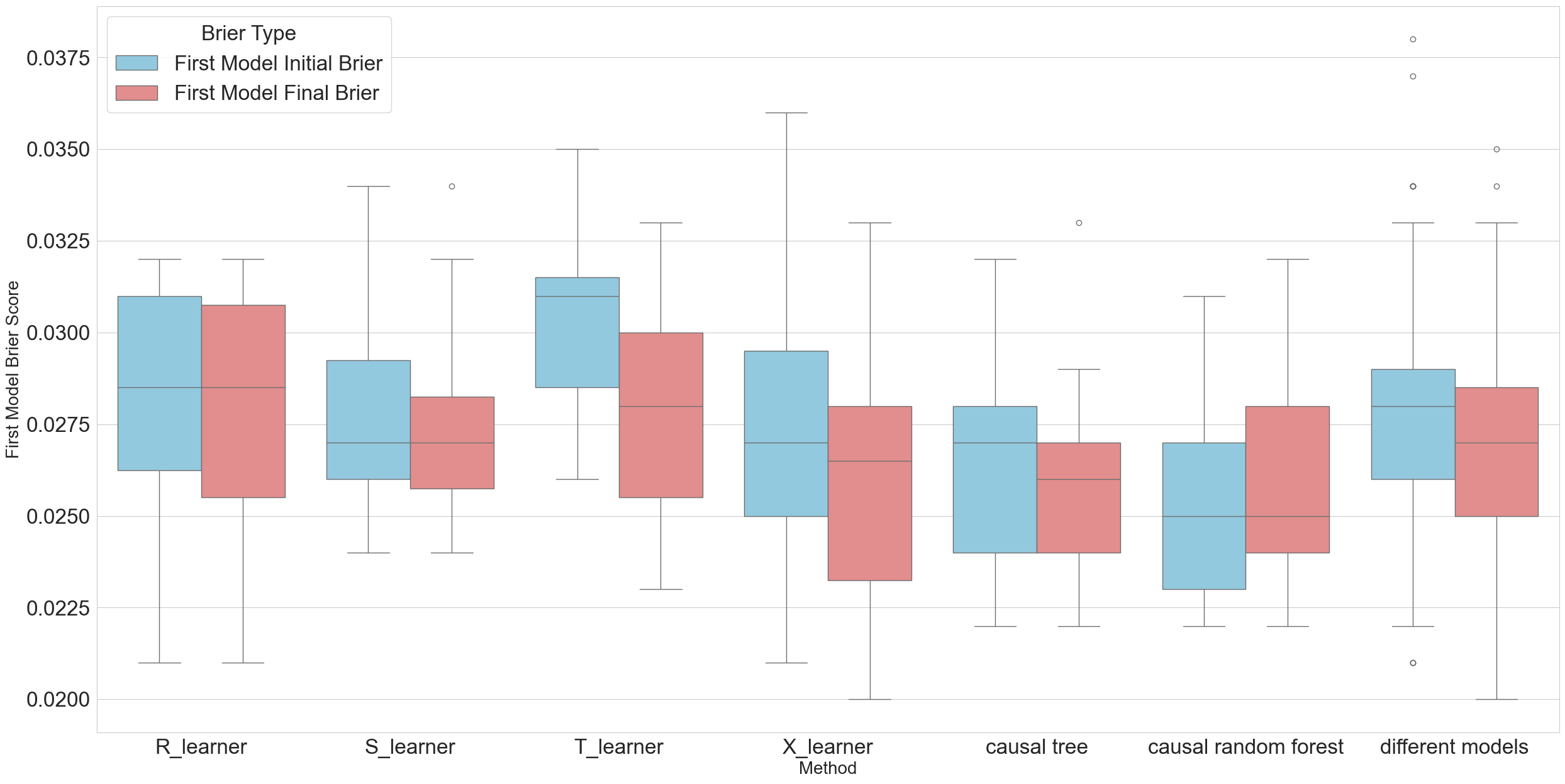}
    \caption{First estimator's Brier Score across different estimators before and after Reconcile on Twins dataset.}
    \label{fig:CATE_brier_twin}
\end{figure}

\begin{figure}
    \centering
    \includegraphics[width=0.9\textwidth]{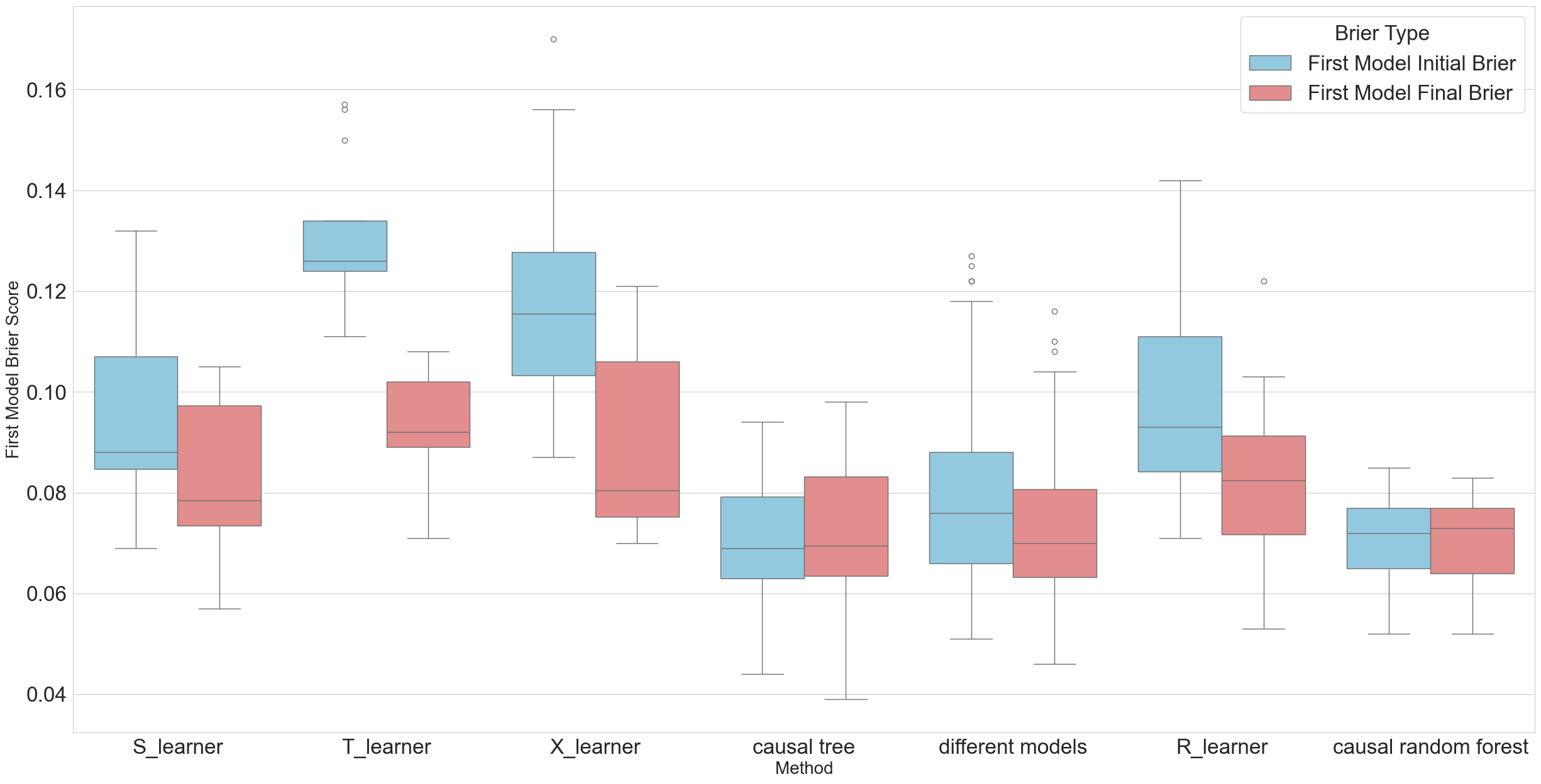}
    \caption{First estimator's Brier Score across different estimators before and after Reconcile on National Study dataset.}
    \label{fig:CATE_brier_national}
\end{figure}

\end{document}